\journal{Sustainable Energy, Grids and Networks.}
\newtheorem{theorem}{Theorem}
\newtheorem{definition}{Definition}
\newaliascnt{lemma}{theorem}
\newaliascnt{corollary}{theorem}
\newtheorem{corollary}[corollary]{Corollary}
\newtheorem{lemma}[lemma]{Lemma}
\crefname{lemma}{Lemma}{Lemmas}
\crefname{corollary}{Corollary}{Corollaries}
\crefname{subsection}{Subsection}{Subsections}
\DeclareMathOperator*{\argmin}{arg\,min}
\begin{document}
\begin{frontmatter}
    \title{
        Distributed Coordination of Deferrable Loads: A Real-time Market with Self-fulfilling Forecasts\tnoteref{funding}
    }
    \tnotetext[funding]{
        This project has received funding from the European Union's Horizon 2020 research and innovation programme under Marie Sklodowska-Curie grant agreement No.\ 675318 (INCITE).
    }
    
    %% Group authors per affiliation:
    \author[ESE,aast]{Hazem A. Abdelghany}
    \address[ESE]{Department of Electrical Sustainable Energy, Faculty of Electrical Engineering, Mathematics and Computer Science, Delft University of Technology, The Netherlands.}
    \cortext[mycorrespondingauthor]{Corresponding author}
    \ead{h.a.m.f.abdelghany@tudelft.nl}

    \author[ESE]{Simon H. Tindemans}
    
    \author[ST]{Mathijs M. de Weerdt}
    
    \address[ST]{Department of Software Technology, Faculty of Electrical Engineering, Mathematics and Computer Science, Delft University of Technology, The Netherlands.}
    
    \author[ESE,cwi]{Han la Poutr\'e}
\address[cwi]{Centrum Wiskunde \& Informatica (CWI), Amsterdam, The Netherlands.}
    \address[aast]{Electrical and Control Engineering Department, Arab Academy for Science, Technology and Maritime Transport, Cairo, Egypt.}

    \begin{abstract}
        Increased uptake of variable renewable generation and further electrification of energy demand necessitate efficient coordination of flexible demand resources to make most efficient use of power system assets. Flexible electrical loads are typically small, numerous, heterogeneous and owned by self-interested agents. Considering the multi-temporal nature of flexibility and the uncertainty involved, scheduling them is a complex task. This paper proposes a forecast-mediated real-time market-based control approach (F-MBC) for cost minimizing coordination of uninterruptible time-shiftable (i.e.\ deferrable) loads. F-MBC is scalable, privacy preserving, and usable by device agents with small computational power. Moreover, F-MBC is proven to overcome the challenge of mutually conflicting decisions from equivalent devices. Simulations in a simplified but challenging case study show that F-MBC produces near-optimal behaviour over multiple time-steps.
    \end{abstract}

    \begin{keyword}
        Market-based Control \sep Markov Decision Process \sep Flexibility \sep Demand Response \sep Distributed Energy Resources.
    \end{keyword}

\end{frontmatter}

 \section{Introduction}
    Power systems have seen an increasing penetration of distributed energy resources (DERs), such as distributed generators, flexible demand, and small-scale renewable generation. This trend has significant impacts on the network, leading to congestion, reduced network utilization, and even instability or system inoperability at the distribution level~\cite{FutureGrid}. Consequently, the transition to future power systems requires either a great deal of investment in grid reinforcement, or efficient use of flexibility from DERs through coordination.
    
    Optimal coordination among DERs is a complex multi-dimensional problem, especially in settings with small, numerous, heterogeneous DERs owned by self-interested agents. The complexity is further amplified by inter-temporal constraints introduced by shifting energy consumption and uncertainties in DER usage patterns and renewable-based generation. A suitable coordination approach for such a setting is required to be simple and usable by agents with small computational power~\cite{small_comp}, scalable for settings with numerous DERs, and privacy preserving since the DERs being considered are owned by self-interested agents.
    
    This problem has been considered in a number of settings, including electric vehicle charging~\cite{vanderLinden2018optimal}, deferrable loads such as washing machines, dish washers, and thermostatically controlled loads.  Most control techniques for flexible demand are based either on centralized coordination, top-down control, or price response~\cite{flexreview,state}. Centralized and top-down approaches (e.g.~\cite{topdown1}) are not suitable when considering privacy, autonomy, and scalability constraints, whereas completely decentralized approaches relying on one-way communication (e.g.\ price response~\cite{rtp2,rtp}) have uncertain realized system response. A comprehensive review of advantages and disadvantages of control approaches can be found in~\cite{powermatcher}.
    
    \subsection{Market-based Control}
        A natural fit for the problem of coordinating self-interested DERs is transactive control, which refers to control approaches which perform coordination and control tasks by using economic incentive signaling to exchange information about generation, consumption, constraints, and responsiveness of assets over dynamic, real-time forecasting periods~\cite{transactive}.
        Market-based control (MBC) describes a class of transactive control algorithms that take the form of a mediated market~\cite{MBC}. In an attempt to find a middle way between the aforementioned approaches, this paradigm provides simultaneously a degree of privacy, autonomy, certainty, and openness compared to the aforementioned approaches~\cite{state, powermatcher}. However, when used for coordination among numerous DERs, over multiple time-steps and taking into account uncertainty, MBC approaches rapidly grow in complexity, limiting their scalability and practical feasibility. For example, multi-settlement markets, such as in~\cite{multi1, multi2} require complex bid formulation algorithms, which is especially hard for devices with small computational power~\cite{small_comp}. Accounting for uncertainty similarly increases complexity, as is evident in the hierarchical MBC approach in~\cite{complex}. In~\cite{iterative1, iterative2}, iterative approaches for coordination were proposed. An iterative approach based on Mean-field games was proposed in~\cite{MFG}. However,~\cite{fast_bidding} indicates iterative approaches are not suitable for real-time operations due to uncertain convergence time and dependence on initial conditions. The same logic applies for negotiation approaches such as in~\cite{neg1}. On the other hand, approaches based on the assumption of cooperative agents~\cite{coop1,coop2, cooperative} are not suitable for the settings with self-interested agents. 
        
    \subsection{Real-time Market-based Control}
        In this paper, we use the term ``Real-time market-based control (RTMBC)'' to describe a simple and scalable form of MBC. In RTMBC, DERs are represented by autonomous agents participating in a spot power market. The market is cleared for the upcoming time-step (i.e.\ in real time) by means of a double auction. The use of decentralized decision making and a centralized one-shot market clearing simplifies the whole process. Device level constraints and objectives are taken into account in the process of bid/offer formulation. An example of such approach can be seen in~\cite{powermatcher}. 
            
        Despite these beneficial properties, in practice RTMBC often leads to poor performance over multiple time-steps due to uncertainty, inter-temporal constraints of uninterruptible devices, and mutually-conflicting decisions that arise from decentralization and the self-interested behaviour of agents~\cite{mutually,heterogenity_instability,fast_bidding}. For example, in~\cite{ancillary} the effect of such behaviour is shown to lead to exhaustion of flexibility in the system. An approach for coordination among thermostatically controlled loads was presented in~\cite{TCL}. This was further studied in~\cite{mutually} where it was found prone to load synchronization and power oscillations. Agents submitting similar bids (i.e.\ Bulk switching), and clustering at lower price periods are phenomena that occur when optimal decisions from the agents' perspective conflict and lead to sub-optimal outcomes both at the agent level and system level. This is most apparent in case of identical devices given the same information. Therefore, identical devices pose a challenge to many coordination approaches.
            
    \subsection{Summary of contributions}
        In this paper, we aim at solving the problem of scheduling a set of uninterruptible deferrable loads over multiple time-steps to minimize generation cost taking into account uncertainty. We will refer to this as the ``optimal coordination problem''. To achieve this, we propose the forecast-mediated market-based control approach (F-MBC). F-MBC relies on decentralized bid formulation and centralized one-shot market clearing to coordinate among these devices. The proposed approach is scalable and preserves end-user privacy and autonomy. It relies on probabilistic price forecasts obtained by a facilitator that accounts for uncertainty in renewable-based generation and DER usage patterns. Moreover, we design a low-complexity Markov decision process(MDP) based optimal bidding algorithm for deferrable loads, which is usable by a device with limited computational power (e.g.\ embedded systems) to formulate a bid that minimizes its own expected cost given probabilistic price forecasts. We show that the combination of probabilistic reference prices, optimal bidding, and real-time market clearing solves the problem of mutually-conflicting decisions among identical devices; that is, two  identical device agents with different deadlines will never have the same bid. This is shown mathematically in \cref{sec: method}. Additionally, we design a tie-breaking mechanism to assist in market clearing when several agents are indifferent between different actions at the market-clearing price. Moreover, we prove approximate consistency of the approach by bounding the deviation from the optimal solution that occurs if the forecast correctly identifies an optimal feasible solution. We show by simulation that the proposed F-MBC approach achieves near-optimal system level performance over multiple time-steps (i.e.\ minimizes overall generation cost) in \cref{sec: result}.
            
\section{Methodology} \label{sec: method}
     Consider a setting of uninterruptible deferrable loads, with deadlines set by their respective owners. This resembles a collection of devices such as irrigation pumps, greenhouse lighting, or home appliances such as washing machines, dryers, etc.~\cite{example1,example2}. We assume that each of the deferrable loads acts in its economic best interest, minimizing its consumption cost subject to device level constraints (e.g.\ deadline, uninterruptibility). 
     
     The challenge is to design a scheme that fully or approximately solves the optimal coordination problem, scheduling the flexible demand over multiple discrete time-steps with the objective of minimizing the overall generation cost. It is important to note that the global cost minimization is equivalent to social welfare maximization  since the total energy demand (and, therefore, the utility) is fixed. Therefore, for the remainder of the paper we will just use the term ``optimal coordination''. 
     
     To achieve this, we rely on the idea of ``self-fulfilling forecasts''. As illustrated in \cref{fig:overview}, F-MBC comprises three types of autonomous agents; A facilitator, an auctioneer, and a device agent per flexible device. The facilitator is a central entity which, in general, does not have access to private information (e.g.\ deadlines, cycle durations) and cannot directly control the devices. This is a sensible assumption in settings where DERs are small, numerous and owned by self-interested agents. Such an approach is similar to the vision of layered decentralized optimization architecture in~\cite{Layered}. The facilitator utilizes aggregate historical information, forecasts, behaviour patterns, and system models to estimate an ``offline optimal'' solution to the optimal coordination problem. Some examples of techniques to solve such a problem can be found in~\cite{SO,MCPforecast,forecasts}. The resulting estimated schedule is probabilistic and results in a probabilistic reference price for each time-step (in the form of a probability distribution), thus taking into account uncertainty. Throughout, we assume that the price of energy paid by devices equals the marginal cost of generation at the relevant time step.
     The probabilistic reference prices are then communicated to the flexible demand agents which use this information for bid formulation. Device agents formulate their respective bids in a self-interested manner (i.e.\ minimizing the expected cost incurred by the agent). A device agent takes into account local deadline and uninterruptibility constraints in addition to the probabilistic reference prices provided by the facilitator. Bids are then submitted to a central auctioneer in the form of a demand function. Finally, an allocation is made through a one-shot double auction and an additional tie breaking mechanism. The facilitator updates the ``estimate'' for the future taking into account the market outcome which results in an updated probabilistic reference price signal. The whole process is repeated for every time-step.

     It is noteworthy here that aggregation of bids can be done centrally or through hierarchical aggregation of bid functions. This means that the complexity of aggregating bids is linear, at worst, or logarithmic, at best, when the system is organized as a binary tree. This, combined with decentralized bid optimization, one-shot market clearing and the non-iterative nature of the approach makes it scalable and simple to implement even in scenarios where agents have small computational power. Moreover, the outcome of this process is a near-optimal system-level behaviour over multiple time-steps. The resulting coordination approximates the ``offline optimal coordination'' estimated a priori, so the probabilistic reference prices can be considered ``self-fulfilling''.

    \begin{figure}
    		\centering
    		\includegraphics[width=0.8\linewidth,keepaspectratio]{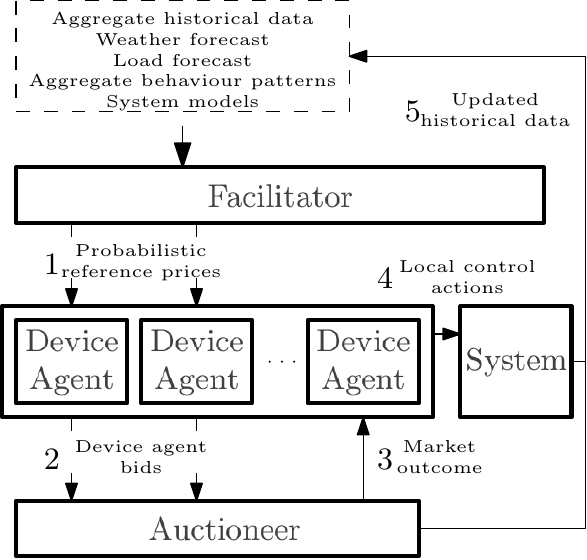}
    		\caption{Schematic overview of the proposed F-MBC approach}
    		\label{fig:overview}
	\end{figure}
    
    \subsection{Mathematical Framework}
        Consider a scheduling horizon consisting of the set of discrete time steps $\mathcal{T}=\{1,\ldots, \text{T}\}$ with fixed intervals $\Delta t$. The subscript $t$ will be used to refer both to the instant $t$ as well as the interval that immediately follows, depending on the context. The system comprises a set $\mathcal{A}$ of uninterruptible deferrable devices owned by self-interested consumers. Each device is represented by an agent $a$ defined by a deadline, duration and a power consumption pattern $d^{a},\,D^{a},\,\{P^a_0, \ldots ,P^a_{D^{a}-1}\}$ respectively. The system also has inflexible demand, and flexible generation with a non-decreasing marginal cost $m_t(P)$ (which may include zero-cost renewable generation). An optimal coordination denotes the allocation of flexible devices over the scheduling horizon, such that the overall cost of generation to meet the aggregate demand $P_{1:T}$ is minimized: 
        \begin{equation}
            P^*_{1:T} = \argmin_{P_{1:T}}  \sum_{i=1}^{\text{T}} \Delta t \int_{0}^{P_{t}} m_{t}(P') \mathrm{d}P'.
        \end{equation} 
        This is subject to system-level constraints (i.e.\ supply/demand matching, flexible generation limits), and agent-level constraints (i.e.\ deadlines, uninterruptibility).
        
        \subsection{MDP-based Optimal Bidding}
            At this point, we describe how an agent may compute and optimize its bid given the probabilistic reference prices supplied by the facilitator. We represent these prices, having the form of time-dependent probability distributions, by independent random variables $X_{t}$ with bounded expectation $\mathbb{E}(X_{t})=\bar{x}_{t}<\infty$. 
            Each device agent aims at minimizing its expected cost out of self-interest. For that, we develop a MDP model for optimal bidding which consists of a state space, action space and a set of rewards/costs. We show that the MDP-based bidding algorithm minimizes the expected cost for the device (i.e.\ optimal in expectation) given the available information (i.e.\ probabilistic price reference) and the assumption that a single device is a price taker. For an uninterruptible deferrable device, the action space only consists of two actions \texttt{on}, \texttt{off}. The state consists of a possible realization of the price, and the status ($s_{t}^{a}$) of the device, where $s_{t}^{a}=0$ for a device that has not started yet (i.e.\ \texttt{waiting}), $s_{t}^{a}=\{1,\ldots,D^a-1\}$ for a device that has started (i.e.\   \texttt{running}), and $s_{t}^{a}=D$ for a device that has run for $D$ time-steps (i.e.\ \texttt{finished}). 
            
             If the uninterruptible device $a$ switches from the \texttt{waiting} to the \texttt{running} state at time $t$ with a market clearing price $x_t$, its expected total running cost is a combination of the cost of starting at $t$ with a price of $x_t$, and the sum of the expected costs for the remainder of the device's cycle,
            \begin{equation}
                C^{s,a}_t(x_t) = x_t \cdot P_{0}^{a} \cdot \Delta t
    		                +
            \sum_{i=1}^{D^{a}-1} 
    						\bar{x}_{t+i}			
    						\cdot P_{i}^{a} \cdot \Delta t
		            \label{eq: startC}
            \end{equation}
            The agent aims to minimize its running cost. It does so by, at each time step, submitting a bid function $b^a_t(x)$, defined by a threshold price $\hat{x}_{t}^{a}$. The definition of an optimal bid function is given below.
        	
            \begin{theorem} \label{th:bids}
                For a sequence of independent reference prices $X_t$ with bounded expectation, agent $a$ minimizes its expected running cost by submitting the threshold-based bid function $b_{t}^{a}(x)$, where
		        \begin{align}
			        b_{t}^{a}(x)= & \begin{cases}
				    P^{a}	& x	\leq \hat{x}_{t}^{a}	\\
				    0 		& x	>	 \hat{x}_{t}^{a}	\\
			        \end{cases}, \\
			        P^{a}=& \begin{cases}
			        P_{s_{t}^{a}}^{a}           &   \text{if} s_{t}^{a} <D^{a}\\
			         0                               &   \text{otherwise}\\
			        \end{cases}\\
			        \hat{x}_{t}^{a}=& \begin{cases}
				    -\infty		&	\text{if}\; s_{t}^{a}=D^a	            \\
				    \infty		&	\text{if}\; s_{t}^{a}={1,\ldots,D^a-1}	\\
				    z_{t}^{a}	&	\text{if}\; s_{t}^{a}=0             	\\
			        \end{cases},
			    \label{eq: threshold1}			\\
                    z_{t}^{a}=& \begin{cases}
				    \infty		                    &	t \ge d^{a}-D^{a} 		\\
				    \frac{
					    C^{*a}_{t+1}
					    -
					    \sum_{i=1}^{D^{a}-1}
					    \bar{x}_{t+i} \cdot P_{i}^{a} \cdot \Delta t
				    }
				        {
					P_{0}^{a} \cdot \Delta t
				    } 			                    &	t<d^{a}-D^{a}	\\
			        \end{cases},
			        \label{eq: threshold2}			\\
		        \intertext{and $C^{*a}_{t}$ is the optimal expected cost at $t$, which is recursively defined in reverse order for $t \le d^{a}-D^{a}$ by}
		        	C^{*a}_{d^{a}-D^{a}}= & \sum_{i=0}^{D^a-1}\bar{x}_{d^{a}-D^{a}+i}\cdot P_{i}^a \cdot \Delta t , \label{eq: lastC} \\
    					C^{*a}_{t}= &   				    \mathrm{Pr}(X_{t}>\hat{x}_{t}^{a})\cdot C^{*a}_{t+1}  \nonumber	\\
    				&	+
    						\mathrm{Pr}(X_{t}\leq \hat{x}_{t}^{a}) \cdot \mathbb{E}\left[C^{s,a}_t (X_{t})|X_{t}\leq \hat{x}_{t}^{a}\right] . 
    			    \label{eq: optimalC}			
		        \end{align}
	        \end{theorem}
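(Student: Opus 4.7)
The statement concerns a finite-horizon optimal stopping problem: given the current status $s^a_t$, the agent must decide \emph{when} to transition from the \texttt{waiting} to the \texttt{running} state. Once running, uninterruptibility forces the device to stay on until completion (hence the trivial bid with threshold $+\infty$), and once finished it has no remaining demand (threshold $-\infty$). So the only nontrivial content is showing that the optimal policy in the \texttt{waiting} state has the threshold form with $\hat{x}^a_t=z^a_t$. My plan is to establish this by backward induction on $t$, working from the latest feasible start time $d^a-D^a$ back to $t=0$.

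\textbf{Base case.} At $t=d^a-D^a$, missing the deadline is infeasible, so the device must start regardless of the realized price. This is captured by $\hat{x}^a_t=\infty$, and the resulting expected cost is exactly $\sum_{i=0}^{D^a-1}\bar{x}_{d^a-D^a+i}P^a_i\Delta t$, matching~\eqref{eq: lastC}.

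\textbf{Inductive step.} Assume the claim holds at $t+1$, so $C^{*a}_{t+1}$ as defined in~\eqref{eq: optimalC} is the optimal expected running cost when the device is still in the \texttt{waiting} state at $t+1$. At time $t<d^a-D^a$ with realized price $X_t=x_t$, the agent compares two scalar quantities: the cost of starting immediately, $C^{s,a}_t(x_t)$ from~\eqref{eq: startC}, against the continuation value $C^{*a}_{t+1}$ (which, crucially, does not depend on $x_t$ because the $X_\tau$ are independent). Starting is weakly optimal iff $C^{s,a}_t(x_t)\le C^{*a}_{t+1}$; since $C^{s,a}_t$ is affine and strictly increasing in $x_t$ (as $P^a_0\Delta t>0$), this is equivalent to $x_t\le z^a_t$ with $z^a_t$ given by~\eqref{eq: threshold2}. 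Hence the optimal policy is a threshold policy, which is exactly what the bid function $b^a_t$ expresses: submitting demand $P^a_0$ at any price up to $\hat{x}^a_t$ ensures the device is cleared when $X_t\le\hat{x}^a_t$ and rejected otherwise. Plugging this policy into the one-step value expression and conditioning on $\{X_t\le\hat{x}^a_t\}$ versus its complement yields~\eqref{eq: optimalC}, closing the induction.

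\textbf{Remarks on subtleties.} Two issues deserve care but are not deep. First, the tie case $X_t=\hat{x}^a_t$: here the agent is indifferent between starting and waiting, so either choice preserves optimality and any tie-breaking rule (such as the mechanism mentioned after the theorem) is consistent with the proof; this justifies the weak inequality in the threshold. Second, well-definedness of $C^{*a}_t$: boundedness of $\mathbb{E}(X_t)$ and the finite horizon $d^a-D^a$ make all expectations finite, and the conditional expectation $\mathbb{E}[C^{s,a}_t(X_t)\mid X_t\le\hat{x}^a_t]$ is well-defined whenever $\Pr(X_t\le\hat{x}^a_t)>0$ (otherwise the term is multiplied by zero and can be taken as any finite value). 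The main conceptual obstacle is simply recognizing the separability: because future prices are independent of $X_t$ and the device cannot be interrupted once started, the continuation value is a single number rather than a function of $x_t$, which is what collapses the Bellman recursion to a scalar threshold comparison.
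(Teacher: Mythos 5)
Your proposal is correct and follows essentially the same route as the paper's own proof: dispose of the \texttt{finished} and \texttt{running} states as trivially constrained, then use backward induction from $t=d^a-D^a$ with the threshold obtained by comparing the immediate-start cost $C^{s,a}_t(x_t)$ to the continuation value $C^{*a}_{t+1}$. You are somewhat more explicit than the paper on the supporting details (strict monotonicity of $C^{s,a}_t$ in $x_t$, independence making the continuation value a scalar, and the tie/well-definedness caveats), which strengthens rather than changes the argument.
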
 
	        
	        \begin{proof} 
                In order for $b^a_t(x)$ to be optimal, the optimal action for an agent must be \texttt{on} if the clearing price $x_{t}$ is smaller than or equal to the threshold bid $\hat{x}_{t}^{a}$, and \texttt{off} if it is larger than the threshold bid. First, if $s^a_t =D $ (i.e.\ \texttt{finished}), the only feasible, thus optimal, action is \texttt{off} regardless of the price ($b^a_t(x)= 0 $, i.e.\ $\hat{x}_{t}^{a}=-\infty$). Similarly, if $s^a_t={1,\ldots D^{a}-1} $ (i.e.\ \texttt{running}), and has not completed its task, the only feasible, thus optimal, action is \texttt{on} regardless of the price ($b^a_t(x)= P_{i}^a $, i.e.\ $\hat{x}_{t}^{a}=\infty$), where $i$ is the relevant time period in the device’s program.
	            Finally, a \texttt{waiting} device has different optimal actions based on the following logic.
	            
	            \begin{itemize}
	                \item At time-step $t=d^{a}-D^{a}$, a \texttt{waiting} device $a$ must switch to the \texttt{running} state to meet the deadline, so the optimal action is \texttt{on} irrespective of the clearing price (i.e.\ $\hat{x}_{t}^{a}=\infty$). The expected cost associated with starting immediately is therefore also optimal: $C^{*a}_{d^{a}-D^{a}}=  \mathbb{E}\left[C^{s,a}_{d^{a}-D^{a}} (X_{d^{a}-D^{a}})\right]$, resulting in \eqref{eq: lastC}.
	                
	                \item At time-steps $t < d^{a}-D^{a}$, if $a$ has not started yet, the action \texttt{on} is optimal when the expected cost for switching on is less than the expected cost for waiting and acting optimally at $t+1$, that is, if $C^{s,a}_t(X_t)  < C^{\mathrm{*a}}_{t+1}$. Conversely, if $C^{s,a}_t(X_t) > C^{\mathrm{*a}}_{t+1}$, only \texttt{off} is optimal. Therefore, the threshold $z^a_t$ for $t < d^a-D^a$ in \eqref{eq: threshold2} is derived from the equality
	                \begin{equation} \label{eq:thresholdequality}
	                    C^{s,a}_t(\hat{x}^a_t) = C^{\mathrm{*a}}_{t+1}.
	                \end{equation}
	                When the equality holds, agent $a$ is indifferent between starting and waiting. 
	                
	            \end{itemize}
                Given the existence of optimal threshold bids $\hat{x}^a_t$ and \eqref{eq: lastC}, the optimal expected cost  \eqref{eq: optimalC} for $t < d^{a}-D^{a}$ follows by backwards induction.
            \end{proof}
            
            In the following, we consider how different deadlines impact the bids of otherwise identical agents. Identical devices pose a challenge due to the increased possibility for synchronised and conflicting decisions~\cite{mutually,heterogenity_instability,fast_bidding,ancillary,TCL}. We argue that F-MBC provides a natural way to resolve such conflicts.
	        
	        In the proofs, we shall assume that at any time, the forecast price has a non-zero probability to exceed the largest finite threshold price: $\mathrm{Pr}(X_t > \hat{x}^a_t) > 0,\,\forall a, \forall t: t < d^a-D^a$. 
	        Practically, this means remaining in a \texttt{waiting} state is always an option, unless an agent is forced to start by an upcoming deadline.\footnote{If this condition does not hold for a given pair $\{a,t\}$, agent $a$ concludes that it is always optimal to start at time $t$ (or at an earlier time), i.e.\ for all possible realisations of the random clearing price $X_t$. This effectively adjusts the deadline $d^a \rightarrow \tilde{d}^a = t+D^a$, thus removing the differentiation in threshold bids among affected devices. We note that this is desirable behaviour if the forecaster correctly identified the range of $X_t$, but may cause problems if this range was underestimated, hence including a non-vanishing tail probability in the forecast is recommended.}
	        
	        \begin{definition}
            Agents $\{1,\ldots,n\}$ are \emph{rapid-starting, identical and deadline-ordered} if their power requirement and  service duration are identical and they start consuming immediately ($ \forall a,i: D^a\equiv D, P^a_i \equiv P_i, P^a_0\neq 0$), but their deadlines satisfy $d^1 < d^2 < \ldots < d^n$. They are \emph{weakly} deadline-ordered if their deadlines satisfy $d^1 \le d^2 \le \ldots \le d^n$.
            \end{definition}

            \begin{lemma}
	            \label{lem: diversity}
	            A collection of $n$ rapid-starting, identical, deadline-ordered devices that is in the \texttt{waiting} state at time $t$, operating under the optimal MDP policy, will bid with a strictly decreasing sequence of threshold prices: $\hat{x}^1_t > \hat{x}^2_t > \ldots > \hat{x}^n_t$. A weakly deadline-ordered collection will bid with a non-increasing sequence of threshold prices: $\hat{x}^1_t \ge \hat{x}^2_t \ge \ldots \ge \hat{x}^n_t$.
	        \end{lemma}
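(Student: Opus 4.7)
The plan is to reduce the lemma to a pairwise monotonicity statement: for any two rapid-starting identical agents $a, b$ with $d^a < d^b$, show that $\hat{x}^a_t > \hat{x}^b_t$ whenever both are in the waiting state. The key observation is that, by the identical-device assumption, $C^{s,a}_t \equiv C^{s,b}_t$ and the threshold formula \eqref{eq: threshold2} reduces to $z^a_t - z^b_t = (C^{*a}_{t+1} - C^{*b}_{t+1})/(P_0 \Delta t)$; combined with $P_0 > 0$, this means it suffices to establish the strict ordering $C^{*a}_{t+1} > C^{*b}_{t+1}$ at every time $t < d^a - D$. For $t \ge d^a - D$, the rule $\hat{x}^a_t = \infty$ from \eqref{eq: threshold1} already dominates the finite $\hat{x}^b_t$ trivially.

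I would prove the ordering of optimal costs by backward induction starting from $t' := d^a - D$. At $t'$, agent $a$ is forced to start, yielding $C^{*a}_{t'} = \mathbb{E}[C^{s}_{t'}(X_{t'})]$; agent $b$ retains a choice with finite threshold, giving the equivalent compact form $C^{*b}_{t'} = \mathbb{E}[\min(C^{s}_{t'}(X_{t'}), C^{*b}_{t'+1})]$, which one obtains from \eqref{eq: optimalC} by splitting the integrand at $\hat{x}^b_{t'}$ and using the defining equality \eqref{eq:thresholdequality}. Since $C^s_{t'}$ is strictly increasing in its argument whenever $P_0 \neq 0$, the event $\{C^s_{t'}(X_{t'}) > C^{*b}_{t'+1}\}$ coincides with $\{X_{t'} > \hat{x}^b_{t'}\}$, which carries positive probability by the stated tail assumption. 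On that event the min is strictly smaller than $C^s_{t'}(X_{t'})$, so $C^{*b}_{t'} < C^{*a}_{t'}$. For the inductive step at $t' < d^a - D$, both costs take the min form and depend on the deadline only through their continuation values, so the hypothesis $C^{*a}_{t'+1} > C^{*b}_{t'+1}$ pushes forward via the pointwise monotonicity of $y \mapsto \min(C^s_{t'}(X_{t'}), y)$, again made strict on the positive-probability event $\{X_{t'} > \hat{x}^b_{t'}\}$.

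The main obstacle is the careful tracking of where strict inequality enters and propagates: it requires simultaneously invoking the rapid-starting hypothesis $P_0 \neq 0$ (for strict monotonicity of $C^s$ in its price argument) and the non-vanishing tail condition on $X_t$ (to guarantee a positive-probability event on which the continuation costs are strictly ranked). A subtle point is anchoring the base case at $t' = d^a - D$ rather than $t' = d^b - D$, since agent $b$'s recursion continues further in time. The weakly deadline-ordered case follows by the same scheme with all strict inequalities replaced by weak ones; when $d^a = d^b$ the two MDPs coincide and all thresholds are trivially equal, so the pairwise statements chain into the monotone sequence $\hat{x}^1_t \ge \ldots \ge \hat{x}^n_t$ (respectively $>$) claimed in the lemma.
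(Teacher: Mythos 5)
Your proposal is correct and follows essentially the same route as the paper: a pairwise reduction, the observation that the threshold ordering is equivalent (via \eqref{eq: threshold2}) to the ordering of continuation costs $C^{*a}_{t+1} > C^{*b}_{t+1}$, a base case anchored at the earlier agent's forced-start time $d^a - D$, and a backward induction whose strictness comes from the positive tail probability $\mathrm{Pr}(X_t > \hat{x}^b_t) > 0$ together with $P_0 \neq 0$ (the paper's Lemmas~\ref{le: lastC} and~\ref{le: Cs}). Your use of the compact Bellman form $C^{*b}_{t} = \mathbb{E}\left[\min\left(C^{s}_{t}(X_{t}), C^{*b}_{t+1}\right)\right]$ is just a neater packaging of the paper's explicit split into the events $\{X_t \le \hat{x}^b_t\}$ and $\{X_t > \hat{x}^b_t\}$, and the symmetry argument for the weakly ordered case matches as well.
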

	        \begin{proof} 
	            Contained in~\ref{sec:diversityproof}.
	        \end{proof}

	        \begin{theorem}
            \label{th: diversity}
	        A collection of $n$ rapid-starting, identical, deadline-ordered devices, operating under the optimal MDP policy, will start (and complete) in order of their deadlines.
	        \end{theorem}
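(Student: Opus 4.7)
My plan is to use \cref{lem: diversity} together with the structure of the optimal threshold bids from \cref{th:bids}, and to induct over time. At any time step $t$, let $\mathcal{W}_t \subseteq \{1,\ldots,n\}$ denote the subset of devices still in the \texttt{waiting} state. Since this subset inherits strict deadline ordering from the original collection, \cref{lem: diversity} applies and yields strictly decreasing thresholds in deadline order among the members of $\mathcal{W}_t$.

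From \cref{th:bids}, the demand function of a waiting device $a$ satisfies $b^a_t(x) = P_0$ for $x \le \hat{x}^a_t$ and $b^a_t(x) = 0$ otherwise. Consequently, at the cleared price $x_t$ the set of waiting devices that switch on equals $\{a \in \mathcal{W}_t : \hat{x}^a_t \ge x_t\}$. By the strict ordering of thresholds, this set is a prefix of $\mathcal{W}_t$ in deadline order: whenever some device starts, every waiting device with an earlier deadline also starts at the same step. The argument then propagates, since the agents that do not start at $t$ are precisely the suffix of $\mathcal{W}_t$ with later deadlines, and these constitute $\mathcal{W}_{t+1}$. Re-applying \cref{lem: diversity} at $t+1$ and iterating over the horizon shows that devices switch to \texttt{running} in non-decreasing order of deadline index.

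Since all devices share the same duration $D$ and are uninterruptible, each one spends exactly $D$ consecutive steps in the \texttt{running} state after its start, so its completion time equals its start time plus $D\Delta t$. The completion order therefore mirrors the start order, establishing the claim.

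The main obstacle I anticipate is being precise about ties at the clearing price: strict threshold ordering rules out ties among distinct waiting devices, so any tie can occur only if the clearing price $x_t$ coincides with some $\hat{x}^a_t$, in which case agent $a$ is indifferent between \texttt{on} and \texttt{off} and the tie-breaking mechanism intervenes. Because the thresholds of the other waiting devices are strictly separated from $\hat{x}^a_t$, however, the prefix structure of starters is preserved irrespective of how the indifference at $a$ is resolved, so no subtle case analysis is required beyond checking this invariance.
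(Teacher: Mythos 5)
Your proposal is correct and follows essentially the same route as the paper's own proof: an induction over successive auctions in which \cref{lem: diversity} guarantees that the waiting devices whose thresholds meet the clearing price are exactly those with the earliest deadlines. Your version is simply a more explicit rendering of the paper's terse argument, spelling out the prefix structure of the starters, the tie at the marginal threshold (which the paper covers with the phrase ``and sometimes including''), and the trivial completion-order step.
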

	        \begin{proof}
	        Prior to the first auction, all agents are in the \texttt{waiting} state. In the auction, agents with a threshold bid exceeding (and sometimes including) the clearing price transition to the \texttt{running} state. \cref{lem: diversity} guarantees that these are agents with the earliest deadlines. This process is repeated for subsequent auctions with devices that have not started yet.
	        \end{proof}
	        
        \subsection{Market Clearing and Tie Breaking}
            The market is cleared via a one-shot double auction for each time-step. We assume that generation truthfully reveals its marginal cost function. The aggregate offer function accounts for flexible generation and inflexible generation in the upcoming time-step in the form of a marginal cost function. Device agents submit their bids only for the upcoming time-step. The aggregate bid function includes inflexible demand and the bids submitted by flexible demand. The market is cleared at time-step $t$ at the price $x_{t}$ at which supply meets demand. Then, the market-clearing price is communicated to device agents which determine their local control actions based on their earlier submitted bids.
            
            Although \cref{lem: diversity} ensures differentiation of bids among devices with different deadlines, equal bids may be submitted, for example if identical devices have identical deadlines. A tie situation occurs when the market clears at the price bid by multiple agents, $x_t = \hat{x}^a_t = \hat{x}^b_t = \ldots$. The aggregate bid/offer functions for such a case are shown in \cref{fig: tie}. A large step in the aggregate bid can cause difficulties in market clearing (i.e.\ bulk switching). To address this issue, we introduce a tie breaking mechanism among such agents.
            
            The tie breaking mechanism determines which of the tied agents can start at the current time-step and which will wait for a later time-step. Each agent submits a random number $\rho^a$ along with its bid. When the auctioneer detects a tie situation, it determines a value $\rho^*$ so that only bids with $\rho^a \le \rho^*$ will be accepted. $\rho^*$ is chosen such that demand most closely approximates the supply at the clearing price $x_t$. 
            
            Due to the discrete nature of the loads, an exact match may not be found. In such a case, the bid of the marginal device $a$ is accepted with probability $\frac{\gamma}{P^{a}}$, where $\gamma$ is the difference between the supply at $x_t$ and the demand without the marginal device. Agents will be charged the market clearing price $x_{t}$ while generation should supply at a slightly higher (lower) set-point, and is paid accordingly. This results in a budget imbalance that vanishes in expectation (i.e.\ averages to zero in the long term). This is illustrated in \cref{fig: tie}.
            \begin{figure}
                \includegraphics[width=\linewidth,keepaspectratio]{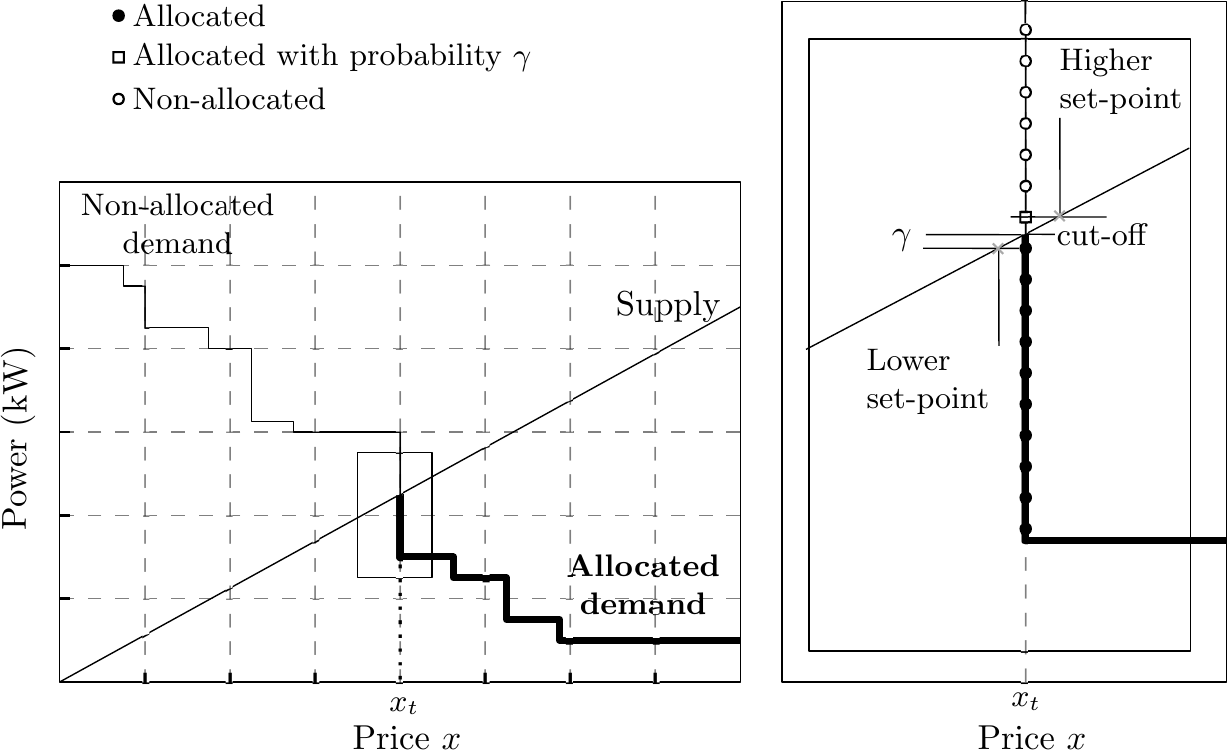}
                \caption{A tie situation; devices that are indifferent between starting or waiting at $x_t$ are allocated randomly.}
                \label{fig: tie}
	        \end{figure}
	        
	        We note that the random tie breaking mechanism does not affect optimality or fairness as it is only used to break ties among agents with bids that are equal to the market clearing price. Agents are indifferent between starting and waiting at their bid price, so those who are not allocated will wait for a later time-step and eventually incur the same expected cost as those which were allocated. Therefore, they have no incentive to game the tie-breaking mechanism. Also, because $\rho^a$ is generated locally, tie breaking can be implemented using a broadcast of $\rho^*$. The alternative, where $\rho^a$ is determined by the auctioneer, would require a targeted message to each device.
	        
            \subsection{Alignment of Optimal Coordination and Self Interest}
            According to the previously stated definition of the optimal coordination problem, our objective is to steer the cluster of flexible devices towards an optimal system-level behaviour (i.e.\ total generation cost minimization). To guarantee a stable optimum, it is necessary that the optimal coordination corresponds to a Nash equilibrium. This guarantees that it is in the best interest of the device agents not to deviate from such behaviour. Therefore, we show that the global cost minimizing solution indeed corresponds to a Nash equilibrium. To analyse the potential for F-MBC to achieve optimal system-level behaviour, we first consider the schedule achieved by a clairvoyant optimizer with complete information. We give conditions under which this schedule corresponds to the outcome of a Nash equilibrium, i.e.\ agents cannot benefit by deviating from the starting time-step allocated by the central optimizer. These results  indicate that the central F-MBC facilitator should aim to estimate the prices that correspond to such a system optimal allocation, so that devices are incentivised to realise the reference prices. 
            
             We consider a cost-optimal allocation of flexible devices, characterized by an aggregate load profile $P^*_t$ and a starting time $t^a$ for each flexible device $a$, summarized as $\mathcal{S} = \left(\{P^*_{t}\}_{t=1:\text{T}} ,\{t^a\}_{a=1:A}\right)$. Without loss of generality, in the following we take the perspective of an arbitrary deferrable device agent $a$ that has a duration $D$ and uninterruptible consumption pattern $\{P^a_0, \ldots ,P^a_{D-1}\}$, which is scheduled to start at $t=t^a$ under the cost-optimal allocation. No assumptions are made about the properties of other flexible loads. Let $P^{\lnot a}_t$ be the cost-optimal load pattern $P^*_t$ minus the consumption of device $a$ starting at $t^a$, and $m_t(P)$ be the monotone increasing function in $P$ which represents the marginal cost of a unit of generation at generation level $P$ and time $t$. The cost to the system of running device $a$ at time $t$ is 
            \begin{equation} \label{eq:Kdefinition}
                K^a_t = \Delta t \sum_{i=0}^{D-1} \int_{P^{\lnot a}_{t+i} }^{P^{\lnot a}_{t+i}  + P^a_i} m_{t + i}(P) \mathrm{d}P.
            \end{equation}
            The fact that the starting time $t^a$ is optimal with respect to overall system cost, implies that
            \begin{equation} \label{eq:Kcondition}
                K_{t^a}^a \le K_t^a, \qquad \forall t \in \mathcal{T}.
            \end{equation}
	        Switching from the system perspective to that of an individual, we assume that the $a$ pays a price equal to the marginal cost of energy. The total price paid by agent $a$ starting at $t$ is
	        \begin{equation} \label{eq:Pidefinition}
                \Pi^a_{t} = \Delta t \sum_{i=0}^{D-1} m_{t + i}(P^{\lnot a}_{t+i} + P^a_i) P^a_i.
            \end{equation}
	        The allocation $\mathcal{S}$ is a Nash equilibrium if for each agent $a$,
            \begin{equation} \label{eq:NEcondition}
                \Pi_{t^a}^a \le \Pi_t^a, \qquad \forall t \in \mathcal{T}.
            \end{equation}
            In the following, we identify conditions where global cost-optimality \eqref{eq:Kcondition} implies the Nash equilibrium condition \eqref{eq:NEcondition}. We first consider a (restrictive) special case in which the implication holds exactly; we then consider a weaker set of conditions that results in \cref{NE2} and \cref{col:largeNE} with much broader applicability.
            
            \begin{theorem}
                If $m_t(P)$ is an affine function with constant slope $\mathrm{d}m_t(P)/\mathrm{d}P=c, \forall t$, then $\mathcal{S}$ is a Nash equilibrium.
                \label{affine}
            \end{theorem}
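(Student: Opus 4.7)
The plan is to show that under the affine assumption, the difference $\Pi^a_t - K^a_t$ is a constant that does not depend on the candidate starting time $t$, so that condition \eqref{eq:Kcondition} (cost-optimality from the system's perspective) implies \eqref{eq:NEcondition} (Nash optimality from the agent's perspective) for every deferrable device $a$.

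First I would write the affine marginal cost as $m_t(P)=\alpha_t+cP$ for some time-dependent intercept $\alpha_t$ and the common slope $c$. Substituting this into \eqref{eq:Kdefinition}, the inner integral evaluates to $\alpha_{t+i}P^a_i + cP^{\lnot a}_{t+i}P^a_i + \tfrac{c}{2}(P^a_i)^2$, giving
\begin{equation*}
K^a_t = \Delta t \sum_{i=0}^{D-1}\left[\alpha_{t+i}P^a_i + cP^{\lnot a}_{t+i}P^a_i + \tfrac{c}{2}(P^a_i)^2\right].
\end{equation*}
Substituting the same affine form into \eqref{eq:Pidefinition} yields
\begin{equation*}
\Pi^a_t = \Delta t \sum_{i=0}^{D-1}\left[\alpha_{t+i}P^a_i + cP^{\lnot a}_{t+i}P^a_i + c(P^a_i)^2\right].
\end{equation*}

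Subtracting these two expressions, the $\alpha_{t+i}$ and cross terms $cP^{\lnot a}_{t+i}P^a_i$ cancel, leaving
\begin{equation*}
\Pi^a_t - K^a_t = \Delta t \sum_{i=0}^{D-1}\tfrac{c}{2}(P^a_i)^2,
\end{equation*}
which depends only on the device's own fixed consumption pattern and not on $t$ or on the background load $P^{\lnot a}$. Writing this constant as $\kappa^a$, we have $\Pi^a_t = K^a_t + \kappa^a$ for every feasible starting time $t$.

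Combining this identity with \eqref{eq:Kcondition} gives $\Pi^a_{t^a} = K^a_{t^a} + \kappa^a \le K^a_t + \kappa^a = \Pi^a_t$ for all $t \in \mathcal{T}$, which is exactly \eqref{eq:NEcondition}. Since the agent $a$ was arbitrary, no device can strictly lower its own cost by deviating unilaterally from its scheduled starting time, so $\mathcal{S}$ is a Nash equilibrium. I do not anticipate a real obstacle here: the argument is essentially the observation that with constant slope the only price-sensitive interaction between $a$ and the rest of the system is the cross-term $cP^{\lnot a}_{t+i}P^a_i$, which enters $K^a_t$ and $\Pi^a_t$ in exactly the same way, leaving only a $t$-independent self-interaction term $\tfrac{c}{2}(P^a_i)^2$ in the system view and $c(P^a_i)^2$ in the agent view; the mild care needed is just to track that the integral in \eqref{eq:Kdefinition} produces the factor $\tfrac{1}{2}$ on the quadratic self-term, whereas the evaluation in \eqref{eq:Pidefinition} does not.
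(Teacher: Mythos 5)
Your proof is correct and follows essentially the same route as the paper: both evaluate the integral in \eqref{eq:Kdefinition} under the affine assumption to obtain $K^a_t = \Pi^a_t - \frac{c\Delta t}{2}\sum_{i=0}^{D-1}(P^a_i)^2$, observe that the offset is independent of $t$ and of the background load, and conclude that \eqref{eq:Kcondition} implies \eqref{eq:NEcondition}. Your version simply spells out the intermediate algebra (the cancellation of the intercept and cross terms) that the paper leaves implicit.
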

            \begin{proof}
                Evaluating the integral in \eqref{eq:Kdefinition} using the affine structure of $m_t(P)$ yields
                \begin{equation}
                    K_t^a = \Pi_t^a - \frac{c \Delta t}{2} \sum_{i=0}^{D-1}(P^a_i)^2.
                \end{equation}
                Because the last term does not depend on $t$ or $P^*$, \eqref{eq:Kcondition} implies \eqref{eq:NEcondition} and $\mathcal{S}$ is a NE.
            \end{proof}
            Note that $m_t(P)$ does not need to be strictly affine with slope $c$ for all $P$, but only for those marginal power levels that are accessible by flexible devices. This is the case in the example in \cref{sec: result}.

            \begin{definition}
                The allocation $\mathcal{S}$ is a $\delta$-relaxed Nash equilibrium if the condition \eqref{eq:NEcondition} is replaced by the weaker condition 
                            \begin{equation}  \label{eq:deltaNE}
                    \Pi_{t^a}^a \le (1+\delta) \Pi^a_{t} , \qquad \forall t \in \mathcal{T}.
                \end{equation}
            \end{definition}
            
            The $\delta$-relaxed Nash equilibrium is effectively a Nash equilibrium for devices that are insensitive to relative price differentials of size $\delta$. Clearly, it converges to a regular Nash equilibrium in the limit $\delta \downarrow 0$. We note that this is closely related to the concept of an $\varepsilon$-equilibrium~\cite{AGT}.

            \begin{theorem} \label{NE2}
                If there exists an $\varepsilon < 1$ so that,
                \begin{multline} \label{eq:smalldifference}
                    m_t(P^{\lnot a}_{t} + P^a_i) - m_t(P^{\lnot a}_{t}) \le \varepsilon m_t(P^{\lnot a}_{t} + P^a_i), \\ \forall t \in \mathcal{T}, \forall i \in \{0,\ldots, D-1 \},
                \end{multline}
                then $\mathcal{S}$ is a $\delta$-relaxed Nash equilibrium with $\delta = \varepsilon/(1-\varepsilon)$
            \end{theorem}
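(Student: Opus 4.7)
The plan is to sandwich the system cost $K^a_t$ between two expressions proportional to the private cost $\Pi^a_t$, and then push the given optimality chain $K^a_{t^a} \le K^a_t$ through those bounds. Specifically, for each starting time $t$ and stage $i$, the integrand $m_{t+i}(P)$ in the definition \eqref{eq:Kdefinition} of $K^a_t$ is monotone, so on $[P^{\lnot a}_{t+i},\,P^{\lnot a}_{t+i}+P^a_i]$ it lies between $m_{t+i}(P^{\lnot a}_{t+i})$ and $m_{t+i}(P^{\lnot a}_{t+i}+P^a_i)$. Integrating gives a length-$P^a_i$ interval and yields the two-sided bound
\begin{equation*}
\Delta t \sum_{i=0}^{D-1} m_{t+i}(P^{\lnot a}_{t+i}) P^a_i \;\le\; K^a_t \;\le\; \Delta t \sum_{i=0}^{D-1} m_{t+i}(P^{\lnot a}_{t+i}+P^a_i) P^a_i \;=\; \Pi^a_t .
\end{equation*}

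Next I would invoke the hypothesis \eqref{eq:smalldifference}, applied at each time $t+i$ (the condition is stated $\forall t\in\mathcal{T}$, so it covers shifted indices). Rearranged, it reads $m_{t+i}(P^{\lnot a}_{t+i}) \ge (1-\varepsilon)\,m_{t+i}(P^{\lnot a}_{t+i}+P^a_i)$, and substituting termwise into the lower bound above produces
\begin{equation*}
K^a_t \;\ge\; (1-\varepsilon)\, \Pi^a_t .
\end{equation*}
Thus for every agent $a$ and every candidate start time $t$, $(1-\varepsilon)\Pi^a_t \le K^a_t \le \Pi^a_t$.

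Combining this with the system-optimality inequality \eqref{eq:Kcondition}, namely $K^a_{t^a} \le K^a_t$, gives the chain
\begin{equation*}
(1-\varepsilon)\,\Pi^a_{t^a} \;\le\; K^a_{t^a} \;\le\; K^a_t \;\le\; \Pi^a_t \qquad \forall t \in \mathcal{T}.
\end{equation*}
Since $\varepsilon<1$, dividing by $(1-\varepsilon)$ yields $\Pi^a_{t^a} \le \Pi^a_t/(1-\varepsilon) = (1+\delta)\Pi^a_t$ with $\delta = \varepsilon/(1-\varepsilon)$, which is exactly the $\delta$-relaxed Nash condition \eqref{eq:deltaNE}.

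I do not expect a genuinely hard step here; the proof is essentially a sandwich argument. The only subtlety worth flagging is the alignment of indices: the hypothesis \eqref{eq:smalldifference} is stated at a single time $t$, but to bound $K^a_t$ we need it at the shifted times $t+i$ for $i=0,\ldots,D-1$, which is legitimate because the hypothesis is universally quantified over $\mathcal{T}$ (assuming the horizon is long enough to contain $t+D-1$; boundary cases can be absorbed into the definition of $\mathcal{T}$ or handled by noting that post-deadline stages contribute zero power). Care must also be taken that $m_t$ is non-negative so that the multiplicative $\varepsilon$-bound makes sense, but this is implicit in interpreting $m_t$ as a marginal generation cost.
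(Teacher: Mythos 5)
Your proposal is correct and follows essentially the same route as the paper: establish the two-sided bound $(1-\varepsilon)\Pi^a_t \le K^a_t \le \Pi^a_t$ from monotonicity of $m_t$ and the hypothesis \eqref{eq:smalldifference}, then chain with \eqref{eq:Kcondition}. You simply spell out the integral sandwich and the index-shift detail that the paper leaves implicit.
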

            \begin{proof}
                From definitions \eqref{eq:Kdefinition}, \eqref{eq:Pidefinition}, \eqref{eq:smalldifference} and the fact that $m_t(P)$ is non-decreasing, it follows that 
                \begin{equation}
                    (1-\varepsilon) \Pi^a_{t} \le K^a_{t} \le \Pi^a_{t}, \qquad \forall t \in \mathcal{T}.
                \end{equation}
                By chaining the first inequality (for $t=t^a$) with \eqref{eq:Kcondition} and the second inequality, we obtain
                \begin{equation}  \label{eq:lemma-result}
                    (1-\varepsilon)\Pi_{t^a}^a \le \Pi_t^a , \qquad \forall t \in \mathcal{T}.
                \end{equation}
                Substitution of $\delta=\varepsilon/(1-\varepsilon)$ and comparison with \eqref{eq:deltaNE} completes the proof.
            \end{proof}
            
            \begin{corollary} \label{col:largeNE}
            In the limit where agents are price takers (individually), $\mathcal{S}$ is a Nash equilibrium. 
            \end{corollary}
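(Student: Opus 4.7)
The plan is to obtain this statement as a direct limiting case of \cref{NE2}. The key interpretive step is to formalise "price taker" to mean exactly that an individual agent's consumption induces a vanishing perturbation in the marginal cost, i.e.\ $m_t(P^{\lnot a}_t + P^a_i) - m_t(P^{\lnot a}_t) \to 0$ for every $t$ and every stage $i$ of the device's cycle. Since $m_t$ is non-decreasing and (generically) positive at the loaded operating point $P^{\lnot a}_t + P^a_i$, this is equivalent to saying that the smallest $\varepsilon$ satisfying \eqref{eq:smalldifference} can be chosen arbitrarily close to zero as the agent is made increasingly marginal.

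Given that reformulation, the proof proceeds in three short steps. First, for any $\varepsilon>0$ small enough (in particular $\varepsilon<1$), the price-taking hypothesis supplies the bound \eqref{eq:smalldifference}, so the premise of \cref{NE2} is fulfilled. Second, \cref{NE2} then provides $\mathcal{S}$ as a $\delta$-relaxed Nash equilibrium with $\delta=\varepsilon/(1-\varepsilon)$. Third, taking $\varepsilon\downarrow 0$ forces $\delta\downarrow 0$, and the defining inequality \eqref{eq:deltaNE} collapses to \eqref{eq:NEcondition}, which is exactly the Nash equilibrium condition.

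The main obstacle, such as it is, lies entirely in the first step: making precise what "in the limit where agents are price takers (individually)" means, since the corollary as stated is informal. I would argue this is the natural reading because the quantities entering \eqref{eq:smalldifference} are precisely the price distortions caused by the agent's own consumption, and a price taker is by definition an agent whose unilateral action does not perceptibly move the market price. Once this identification is granted, the corollary is essentially a continuity statement about the bound in \cref{NE2} as $\varepsilon\downarrow 0$, and no further calculation is required.
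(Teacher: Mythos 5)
Your proposal is correct and follows essentially the same route as the paper: the paper's own proof likewise observes that price takers have negligible influence on $m_t$, applies \cref{NE2} in the limit $\varepsilon \downarrow 0$, and concludes that $\delta \downarrow 0$ collapses \eqref{eq:deltaNE} to \eqref{eq:NEcondition}. Your additional care in formalising what ``price taker'' means via \eqref{eq:smalldifference} is a welcome refinement but not a different argument.
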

            \begin{proof}
                When agents are price takers their influence on $m_t$ is negligibly small; this implies that  Theorem~\ref{NE2} applies with the limit $\varepsilon \downarrow 0$. Therefore, $\delta \downarrow 0$ in \eqref{eq:deltaNE} and the stronger condition \eqref{eq:NEcondition} holds.
            \end{proof}
            This result effectively extends the Nash equilibrium to to all sufficiently large systems with continuous marginal cost functions. Note that the notion of individual device agents being price takers does not preclude devices from \emph{collectively} influencing prices significantly. 
            
        \subsection{Approximate consistency of solutions} \label{sec:consistency}

        In this section we quantify the consistency of the proposed F-MBC approach. Ideally, if the facilitator is able to supply the agents with reference prices that are realisable and near-optimal, the agents should respond with bids that result in start times consistent with that profile. If this is the case, the (near-)Nash Equilibrium that is encoded in the reference prices would become \emph{self-fulfilling}. In order to quantify this property, we investigate deviations from the optimal coordination solution in the limit where the reference prices correspond to such a solution. We do so for the special case of a collection of rapid-starting, identical devices, and a single time step $t$. In the following, superscripts $a$ are dropped for identical quantities (e.g. durations $D$). Near-optimality of price forecasts is represented by price forecasts $X_{t'} = m_{t'}(P^*_{t'}) + \Delta_{t'}$ for $t'>t$, where $P^*_{t:T}$ represents a feasible cost-optimal consumption schedule and the magnitude of $\Delta_{t:T}$ is strictly bounded from below.  
        
        \begin{lemma} \label{lem:pricediff}
        Consider a collection of rapid starting devices operating under the F-MBC framework, and a cost-optimal allocation $\mathcal{S}$, characterised by a starting time $t^a \ge t$ for each device $a$, and an aggregate load $P^*_{t:T}$ (including inflexible load). 
        
        If devices receive near-optimal reference prices $X_{t'} = m_{t'}(P^*_{t'}) + \Delta_{t'}$, where $E[\Delta_{t'}]=0$ and $\mathrm{Pr}(\Delta_{t'} > - \eta)=1$ for $\eta > 0$ for all $t'\in \{t,\ldots, T  \}$, then the difference between the clearing price $x_t$ and the reference price $x_t^*=m_t(P^*_t)$ is bounded by 
        \begin{multline}
            - \frac{\sum_{i=0}^{D-1} P_i \left[\Delta m_{t,i} + \eta \right]}{P_0} \le x_t - x_t^*  \le  \\ 
             \max_{t' \in \{t+1,\ldots,  T\}} \frac{\sum_{i=0}^{D-1} P_i \Delta m_{t',i}}{P_0} \label{eq:pricebounds}
        \end{multline} 
        with
        \begin{equation} \label{eq:deltamdef}
            \Delta m_{t,i} = m_t(P_t^*) - m_t(P_t^* - P_i)
        \end{equation}

        \end{lemma}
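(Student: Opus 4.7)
The plan is to characterise the clearing price $x_t$ through the marginal bidder. By the double-auction clearing mechanism and the threshold bid structure of \cref{th:bids}, $x_t$ equals the threshold $\hat{x}_t^{a^*}$ of some marginal device $a^*$ still in the \texttt{waiting} state. Substituting into the optimal threshold formula \eqref{eq: threshold2} yields the identity
\[
    x_t P_0 \Delta t + \sum_{i=1}^{D-1} \bar{x}_{t+i} P_i \Delta t = C^{*a^*}_{t+1},
\]
where $\bar{x}_{t+i}=m_{t+i}(P^*_{t+i})$ by construction of the near-optimal reference prices. The two halves of the target inequality then follow by bounding $C^{*a^*}_{t+1}$ from above (for the upper bound on $x_t$) and from below (for the lower bound), in each case aligning the comparison against the cost-optimal allocation $\mathcal{S}$.

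For the upper bound I would first invoke \cref{lem: diversity} and \cref{th: diversity} to argue that if $x_t>x_t^*$ then, by monotonicity of $m_t$, strictly more devices start at $t$ than $\mathcal{S}$ prescribes, so the marginal device $a^*$ has scheduled start $t^{a^*}=t'>t$. Since ``wait until $t'$, then start'' is a feasible MDP policy, $C^{*a^*}_{t+1}\le \sum_{i=0}^{D-1}\bar{x}_{t'+i}P_i \Delta t$. Substituting this into the identity, the key algebraic simplification rewrites the resulting differences of reference prices in terms of the marginal footprints $\Delta m_{t',i}=m_{t'}(P^*_{t'})-m_{t'}(P^*_{t'}-P_i)$, after which taking the maximum over admissible $t'\in\{t+1,\ldots,T\}$ produces the claimed bound.

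For the lower bound I would dualise: when $x_t<x_t^*$, some device $a^*$ scheduled for $t$ in $\mathcal{S}$ must instead be waiting, and I would take $a^*$ as the marginal such device. An \emph{lower} bound on $C^{*a^*}_{t+1}$ now comes from the almost-sure inequality $X_{t'}>m_{t'}(P^*_{t'})-\eta$, which guarantees that every pathwise realisation of any delay strategy pays at least $\sum_i (\bar{x}_{\cdot+i}-\eta)P_i \Delta t$. The optimality of $t^{a^*}=t$ in $\mathcal{S}$ then implies that any alternative start time costs more in the deterministic sense, with the gap at the device's scheduled slot measured by $\Delta m_{t,i}$; combined with the threshold identity this delivers the desired lower bound with the additive $\eta$ penalty.

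The main obstacle I anticipate lies in cleanly translating between the MDP's pathwise-optimal expected cost $C^{*a^*}_{t+1}$ under random reference prices and the deterministic cost comparison encoded in $\mathcal{S}$: because the MDP can adapt to favourable realisations of $X_{t'}$, the slack $\eta$ is irreducible and enters the lower bound as an unavoidable correction. A secondary subtlety is the consistent identification of the marginal device's scheduled start in each regime, which relies on the deadline-based threshold ordering of \cref{lem: diversity} to ensure that in the upper-bound case $a^*$ is drawn from devices with $t^{a^*}>t$ and in the lower-bound case from devices with $t^{a^*}=t$.
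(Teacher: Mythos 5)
Your proposal is correct and follows essentially the same route as the paper: both arguments anchor the clearing price to the threshold of a marginal \texttt{waiting} device via the indifference identity $C^{s,a}_t(\hat{x}^a_t)=C^{*a}_{t+1}$, bound the cost-to-go from above by the feasible ``wait until $t^a$'' policy and from below by the almost-sure price floor $X_{t'}\ge m_{t'}(P^*_{t'})-\eta$, and convert the resulting reference-price differences into $\Delta m$ terms using the global optimality condition \eqref{eq:Kcondition}. The only place the paper is slightly more careful is that it does not assert $x_t=\hat{x}^{a^*}_t$ exactly, but instead uses the marginal device's bid as a one-sided bound on $x_t$ in each regime (too few starts versus too many starts), which sidesteps the fact that a double-auction clearing price need not coincide with any single bid threshold.
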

        \begin{proof}
        Contained in~\ref{sec:pricediffproof}.
        \end{proof}
        
        Let $n^*_t$ be the number of devices starting at $t$ under the optimal allocation $\mathcal{S}$, and $n_t$ the number of devices starting using the F-MBC dispatch method. Under the additional assumption of smoothness of the marginal cost of generation, it is possible to derive bounds for the difference $n_t - n_t^*$, as follows. 
        
        \begin{theorem}
        Assume that \cref{lem:pricediff} holds, and that $m_t(P)$ can be approximated around $P^*_t$ by
        \begin{equation}
            m_t(P) = m_t(P^*_t) + c_t \left[ P - P_t^* \right].
        \end{equation}
        Then
                \begin{multline}
            - \left\lceil \frac{\sum_{i=0}^{D-1}  \left[c_{t+i} P_i^2 + \eta P_i \right]}{c_t P_0^2} \right\rceil \le n_t - n_t^*  \le  \\ 
             \max_{t' \in \{t+1,\ldots,  T\}} \left\lceil \frac{\sum_{i=0}^{D-1} c_{t'+i} P_i^2 }{c_{t} P_0^2} \right\rceil. \label{eq:numberbounds}
        \end{multline} 
        \end{theorem}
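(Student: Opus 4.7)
The plan is to feed the price-deviation bounds from \cref{lem:pricediff} into the linear price--load relationship given by the affine approximation of $m_t$ around $P_t^*$, then round to integers.

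First, I would evaluate $\Delta m_{t',i}$ under the linear approximation. Reading the lemma's sums in the natural way (the $i$-th step of a device starting at $t'$ acts on the load at time $t'+i$), one has $\Delta m_{t',i} = m_{t'+i}(P_{t'+i}^*) - m_{t'+i}(P_{t'+i}^* - P_i) = c_{t'+i} P_i$. Substituting into \eqref{eq:pricebounds} collapses the numerators to $\sum_{i=0}^{D-1}\bigl[c_{t+i} P_i^2 + \eta P_i\bigr]$ on the left and $\sum_{i=0}^{D-1} c_{t'+i} P_i^2$ on the right, while the denominator $P_0$ is unchanged.

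Second, I would convert a clearing-price deviation at $t$ into a deviation in the number of starting devices. By the affine approximation, $x_t - x_t^* = c_t (P_t - P_t^*)$, and because each rapid-starting identical device contributes exactly $P_0$ to the load at its starting step, $P_t - P_t^* = (n_t - n_t^*) P_0$. Hence $n_t - n_t^* = (x_t - x_t^*)/(c_t P_0)$. Dividing the two bounds from the first step by the positive quantity $c_t P_0$ reproduces the continuous fractions appearing inside the ceilings of \eqref{eq:numberbounds}; taking ceilings (since $n_t - n_t^*$ is integer-valued while the bounds need not be) yields the claimed inequalities, with the minus sign on the left absorbed by taking the ceiling of the absolute value.

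The main obstacle is justifying $P_t - P_t^* = (n_t - n_t^*) P_0$. Strictly this requires that all already-running devices and inflexible demand at time $t$ coincide with the cost-optimal reference; equivalently, one reads the theorem as isolating the marginal effect of the time-$t$ starting decisions, with any upstream discrepancies absorbed into the forecast error $\Delta_{t'}$ already controlled by \cref{lem:pricediff}. A secondary assumption, which I would state explicitly, is that the affine approximation of $m_t$ remains valid across the entire range of loads spanned by the optimal and F-MBC schedules at time $t$, so that the relation $x_t - x_t^* = c_t (P_t - P_t^*)$ is exact rather than approximate.
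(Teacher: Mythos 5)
Your proof follows essentially the same route as the paper's: the paper likewise sandwiches $n_t - n_t^*$ between $\lfloor (x_t-x_t^*)/(c_t P_0)\rfloor$ and $\lceil (x_t-x_t^*)/(c_t P_0)\rceil$ using linearity of the marginal cost, and then substitutes the linearised $\Delta m$ into the bounds of \cref{lem:pricediff}. Your reading $\Delta m_{t',i}=c_{t'+i}P_i$ is in fact the one that makes the $c_{t+i}$ indices in \eqref{eq:numberbounds} come out correctly (the paper's own proof writes $\Delta m_{t,i}=c_t P_i$, consistent with the literal definition \eqref{eq:deltamdef} but not with the stated theorem), and your explicit caveat about when $P_t-P_t^*=(n_t-n_t^*)P_0$ holds covers ground the paper leaves implicit.
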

        \begin{proof}
        The optimal clearing price $x_t^*$ is associated with the desired number of starting devices $n_t^*$. Linearity of the marginal cost function and rounding up/down to the nearest integer results in 
        \begin{equation}
            \left\lfloor \frac{x_t - x^*_t}{c_t P_0}  \right\rfloor \le n_t - n_t^*  \le \left\lceil \frac{x_t - x^*_t}{c_t P_0}  \right\rceil
        \end{equation}
        Combining with \eqref{eq:pricebounds} and making use of 
        \begin{equation}
            \Delta m_{t,i} = c_t P_i
        \end{equation}
        results in \eqref{eq:numberbounds}.
        \end{proof}
        
        \begin{corollary}
        In the special case where $m_t(P)$ is an affine function with constant slope $c_t=c$, devices consume a constant amount of power ($P_i=P$) and in the limit of vanishing uncertainty ($\eta \downarrow 0$), we have
        \begin{equation}
        -D-1 \le n_t-n_t^* \le D.
        \end{equation}
        \end{corollary}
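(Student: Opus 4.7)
The plan is to apply the preceding theorem directly with the specializations $c_{t'}=c$ (constant slope across time) and $P_i=P$ (constant per-step consumption), and then take the one-sided limit $\eta \downarrow 0$ carefully inside the bound~\eqref{eq:numberbounds}.

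First I would evaluate the upper bound. Substituting $c_{t'+i}=c$ and $P_i=P_0=P$ collapses the summand $c_{t'+i} P_i^2$ to $cP^2$, so the inner ratio becomes $D c P^2/(c P_0^2)=D$, independent of $t'$. Since $D$ is an integer, the ceiling and the maximum over $t'$ are both trivially $D$, yielding $n_t - n_t^* \le D$ directly, with no dependence on $\eta$.

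Next I would handle the lower bound. After the same substitution, $\sum_{i=0}^{D-1}[c_{t+i}P_i^2+\eta P_i] = D c P^2 + D\eta P$, so the argument of the outer ceiling becomes $D + D\eta/(cP)$. The subtle point, and the step where I expect the main conceptual obstacle to sit, is that \cref{lem:pricediff} requires $\eta>0$ strictly; hence for every admissible value of $\eta$, the argument of the ceiling strictly exceeds the integer $D$, so $\lceil D + D\eta/(cP)\rceil = D+1$. Because this value is constant in $\eta$, the one-sided limit $\eta \downarrow 0$ preserves it, giving $n_t - n_t^* \ge -(D+1)$.

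Combining the two directions yields $-D-1 \le n_t - n_t^* \le D$, as claimed. The asymmetry between the bounds ($-D-1$ versus $D$) is accounted for entirely by the right-discontinuity of the ceiling at integers together with the fact that the perturbation $\eta$ enters only the lower bound and remains strictly positive throughout the limiting procedure; no further calculation is needed.
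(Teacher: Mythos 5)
Your proof is correct and follows the only natural route---direct substitution of $c_{t'}=c$ and $P_i=P_0=P$ into the bound \eqref{eq:numberbounds}---which is evidently what the paper intends, since it supplies no explicit proof for this corollary. One small overstatement: $\left\lceil D + D\eta/(cP)\right\rceil = D+1$ holds only once $\eta < cP/D$ (for larger $\eta$ the ceiling exceeds $D+1$), not for \emph{every} admissible $\eta$; but because the corollary takes the limit $\eta \downarrow 0$, only sufficiently small $\eta$ matter and your conclusion stands.
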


        These results show that the F-MBC dispatch converges to the optimal dispatch within hard limits. These limits do not depend on the total number of devices, so the relative performance increases with the number of devices. 
        
        Moreover, the analysis above considers only a single time step $t$. If the number of devices $n_t$ starting at $t$ exceeds $n_t^*$, this results in higher prices for subsequent time steps, thus reducing the number of devices that start at $t+1$ until $t+D-1$. Conversely, if the number of device starts is lower than scheduled, this will incentivise additional starts in subsequent time steps. Although not quantified here, this self-regulating effect further reinforces the convergence to the reference solution.

    \section{Experimental analysis} \label{sec: result}
        Using simulations, we illustrate two features of the proposed F-MBC approach. First, we show that F-MBC performance is near-optimal over multiple time-steps when price uncertainty is negligible (consistency). Second, we analyze the robustness of the solution to varying amounts of uncertainty in price forecasts in order to qualify the need for accurate estimation of reference prices.
        
        \begin{figure}
            \centering
            \includegraphics[width=\linewidth,keepaspectratio]{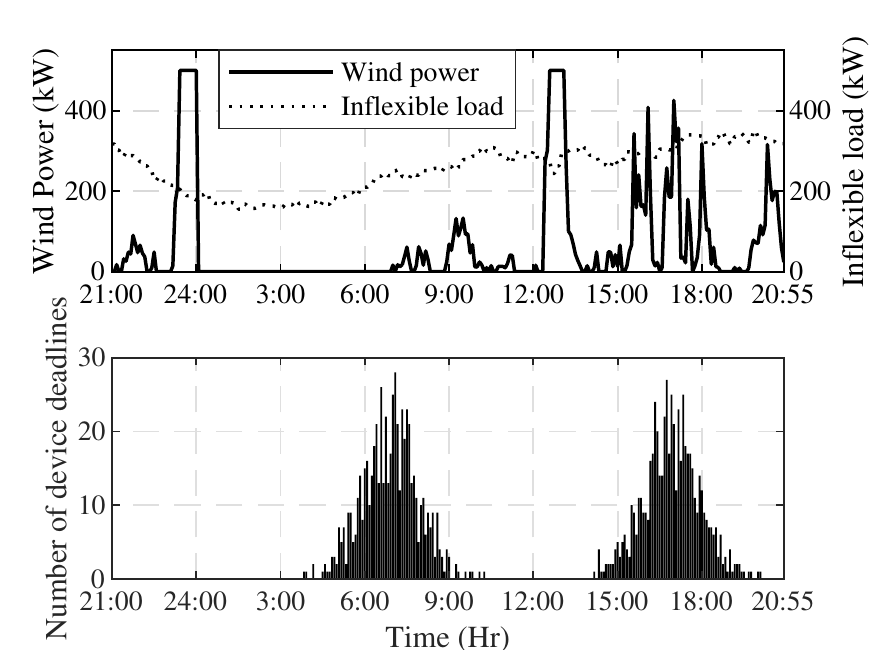}
            \caption{Simulation input data.
                        Top: Wind power generation and inflexible load profile.
                        Bottom: Distribution of device deadlines across the 5 minute time intervals.}
            \label{fig: input}
        \end{figure}
            
	    \subsection{Case Study Description}

		    For this case study, we consider a system with identical deferrable loads. This represents a particularly challenging scenario, due to a high probability of ties occurring and a lumpiness of loads that does not permit full `valley filling' of the solution. A full day (24 hours, starting at 21:00) was simulated with market clearing at \SI{5}{\minute} time steps. A fixed horizon at 20:55 the next day was used for forecasting and bid formation. The system included \SI{1200} deferrable loads, with a duration of \SI{1}{\hour} and fixed consumption of \SI{2}{\kilo \watt} each. Deadlines were distributed in two clusters of \SI{600} devices, normally distributed with a standard deviation of 1 hour around 7:00 in the early morning and 17:00 in the early evening, and rounded to the nearest \SI{5}{\minute} time-step. Inflexible demand was modelled using load data from~\cite{load_data} aggregated and scaled to a peak of \SI{350}{\kilo \watt}. Wind generation with a peak of \SI{500}{\kilo \watt} was generated using~\cite{wind_data} and a simple wind turbine model that approximates the performance of a \SI{100}{\kilo \watt} wind turbine~\cite{wind}, scaled to \SI{500}{\kilo \watt}. We assume that wind power generation is free and curtailable. The simulation input data can be seen in \cref{fig: input}. Simulations were performed in Matlab.

		    Flexible generation was represented by a time-independent, linearly increasing marginal cost function
	        \begin{equation} \label{eq: marginal}
                m(P^{\mathrm{g}})
                =
                \frac{
                    P^\mathrm{g}
                }{
                    k
                } 
                ,\;
                P^\mathrm{g} \geq 0,
            \end{equation}
            where $P^\mathrm{g}$ is the power generated by the flexible generator fleet and $k = \SI{500}{\kilo \watt^2 \minute}$, with arbitrary units for currency. With this choice, the total cost of generation (wind and flexible generation), has an affine marginal cost, provided that $P^\mathrm{g}>0$. Therefore, it follows from \cref{affine} that the device schedule from a clairvoyant optimizer with complete information corresponds to the outcome of a Nash equilibrium.
            
        \subsection{Simulating the Facilitator: Clairvoyance and complete control}
            To establish the potential of F-MBC as a coordination mechanism via simulations, we first identify the theoretical optimal coordination that can be obtained only by a clairvoyant optimizer with complete control. Accordingly, we obtain optimal reference prices that reflect an optimal allocation of demand using perfect foresight. Due to our selection of identical, fixed consumption devices, this can be done by solving the mixed-integer quadratic program (MIQP) that finds the optimal number of devices to start at each time-step $\sigma_{t}$, and optimal flexible power generation for each time-step $P^\mathrm{g}_{t}$ such that the total generation cost over multiple time-steps is minimized:
		    
            \begin{align}
    		    &	\underset{P^{\mathrm{g}}_{t},\sigma_{t},o_{t}}{\text{minimize}}  \quad   \sum_{t \in T}  \frac{1}{2} \cdot \frac{(P^{\mathrm{g}}_{t})^{2}}{k} \cdot \Delta t,
    			\label{objective} \\
                \intertext{subject to, $\forall t \in T$,}
    			& P^{\mathrm{g}}_{t}   \geq 0, \label{const.1} \\
    			& P^{\mathrm{g}}_{t}+P^{\mathrm{r}}_{t}   \geq o_{t} \cdot P^{a} + P^{\mathrm{l}}_{t}, \label{const.2} \\
    			& \sum_{i=1}^{t} \sigma_{i}   \geq \phi_{d} (t+D) ,\; t \leq \text{T}-D,   \label{const.3}\\
    		    &\sum_{i=1}^{t} \sigma_{i}   =\phi_{d} (\text{T}) ,\; \text{T}-D+1\leq t \leq \text{T} \label{const.4} \\
    			& \sigma_{t}  \geq 0 ,\label{const.5} \\
    			& o_{t}= \begin{cases}
    			\sum_{j=1}^{t} \sigma_{j} 			& 	t \leq D	\\
    			\sigma_{t}+(o_{t-1}-\sigma_{t-D}) 	&	t > D		\\
    			\end{cases}
    			\label{const.6}
	        \end{align}
	        where at time-step $t$, $\phi_{d}(t)$ is the number of devices with deadlines before or at $t$, $P^{\mathrm{l}}_{t}$ is power consumption by inflexible load, $P^{\mathrm{r}}_{t}$ is the power from renewable sources, $o_{t}$ is the number of devices running at $t$. The objective function in \eqref{objective} is the integral of the marginal cost \eqref{eq: marginal}.
    	    Generator limits and supply/demand matching constraints are represented by \eqref{const.1} and \eqref{const.2}, respectively. It is assumed that renewable generation is curtailed when a generation surplus occurs. The number of device start-ups to any time-step $t$ must be at least equal to the number of devices which have a deadline before or at $t+D$, and for the last time periods it should be exactly the total number of devices with a deadline before $T$. This is represented by \eqref{const.3}-\eqref{const.4}. Device uninterruptibility is ensured by \eqref{const.5}-\eqref{const.6}. Combined, \eqref{const.3}-\eqref{const.6} guarantee that devices will not miss their respective deadlines. By solving the MIQP, a cost-optimal system load profile is obtained, which corresponds to a set of reference prices $x^{*}_{t}\;\forall t \in T$. While the reference solution here does not account for specific allocation for each agent, one realization of the reference schedule can be achieved by giving priority to devices according to their proximity to their respective deadlines, with ties being broken randomly, and assuming that devices do not switch off until their cycle (duration) is complete. The optimization was repeated after each market clearing to account for deviations from the previous reference solution, to effectively generate an ``up-to-date'' forecast at each time step.

    	\subsection{Simulating the Facilitator: bounded information and Uncertainty}
    	    As previously established, probabilistic reference prices are required. In reality, the facilitator would provide probabilistic reference prices that depend on actual forecasts and information used in generating the reference. Instead, for simulation purposes, probabilistic price forecasts were generated by adding noise to the deterministic reference prices $x^{*}_{t}$ as follows. It was assumed that uncertainties are exogenous and independent for each time-step, and forecast prices at each time-step are  log-normally distributed, with a standard deviation that increases with time. The standard deviation of the price $X_{t}$ as forecast at $t' \le t$ is parametrised by the day-ahead uncertainty $\nu^{24h}$ as
            \begin{equation}
                SD_t = x^{*}_{t} \cdot \nu^{24h} \cdot \frac{(t - t')}{24 h}.
            \end{equation}
        
            Moreover, forecasting errors were simulated by adjusting the mean of the log-normal forecasts: the expected prices $\bar{x}_{t}$ were sampled from the log-normal distribution with mean $x^{*}_{t}$ and standard deviation $SD_t$. The values $\bar{x}_{t},SD_t\, \forall t \in \mathcal{T}$ were communicated to agents to be used for bid formulation.
            
        \subsection{Results} \label{sec: sensitivity}
            \cref{fig: output} shows simulation results obtained with negligible uncertainty ($\nu^{24h} = 10^{-5}$), demonstrating that F-MBC achieves near-optimal performance over multiple time-steps. It can be seen from top panel that the device schedule obtained by F-MBC closely resembles the schedule obtained from the MIQP reference solution. The difference in total generation cost in this case is \SI{0.08}{\percent} compared to the reference solution.
            
             \begin{figure}[t]
        	    \centering
        	    \includegraphics[width=\linewidth,keepaspectratio]{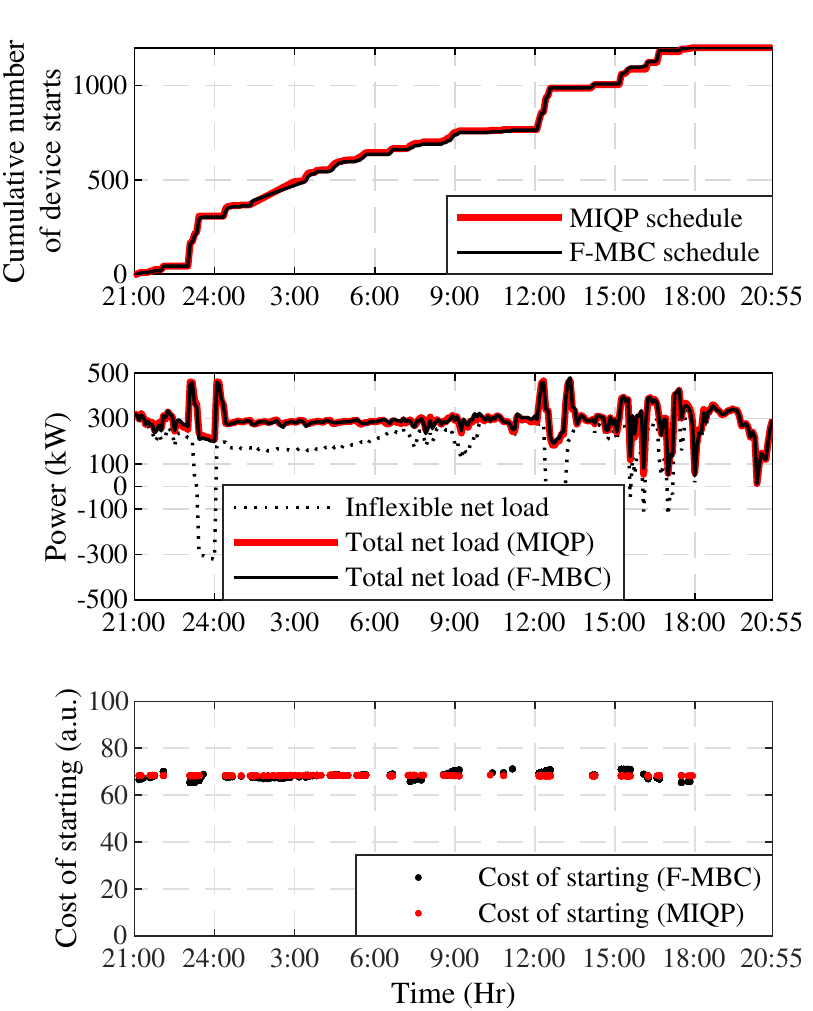}
        	    \caption{Simulation output. Top: Cumulative device starts obtained by F-MBC compared to the MIQP reference solution. Middle: Load minus inflexible generation (i.e.\ power supplied by flexible generation) for three scenarios: without flexible loads, MIQP reference solution and F-MBC solution. Bottom: The realized total cost paid by devices plotted as a function of their starting times.}
        	    \label{fig: output}
        	\end{figure}
        	
            Moreover, the centre panel shows the approximate `valley filling' behaviour of the solution, especially compared to the system without flexible demand (dotted line). We note that  perfect flattening of flexible power generation is not feasible due to the extended run time (1 hour, i.e.\ 12 time steps) of loads. In the bottom panel, costs incurred by devices are plotted against their starting times. The actual costs obtained using F-MBC are very close to the nearly identical costs obtained using MIQP. 

             We compare the performance of F-MBC to three alternative coordination techniques in \cref{fig: methods}, depicting the same information as the lower two panels of \cref{fig: output}. Lack of coordination is represented by the ``latest start'' approach where devices start at the latest time-step possible without missing their respective deadlines. The ``Naive MBC'' approach implements naive agents that submit a bid between the minimum expected price $x^a_{t,min}$ and maximum expected price $x^a_{t,max}$ that occur before their latest start time $d^a-D-1$. The bid placed is $\hat{x}^a_t = x^a_{t,min} + t ( x^a_{t,max}-x^a_{t,min})/(d^a-D^a-1)$, and generally increases as devices approach their deadlines. Moreover, we demonstrate the importance of utilizing probabilistic forecasts by implementing a ``point forecast MBC'' approach, where each agent only receives a time-series of \emph{expected} prices and places an optimal bid using backward induction. This approach performs sub-optimally and yields a total cost error of \SI{9.6}{\percent}. 
     
            	\begin{figure}[th]
        		\centering
        		\includegraphics[width=\linewidth,keepaspectratio]{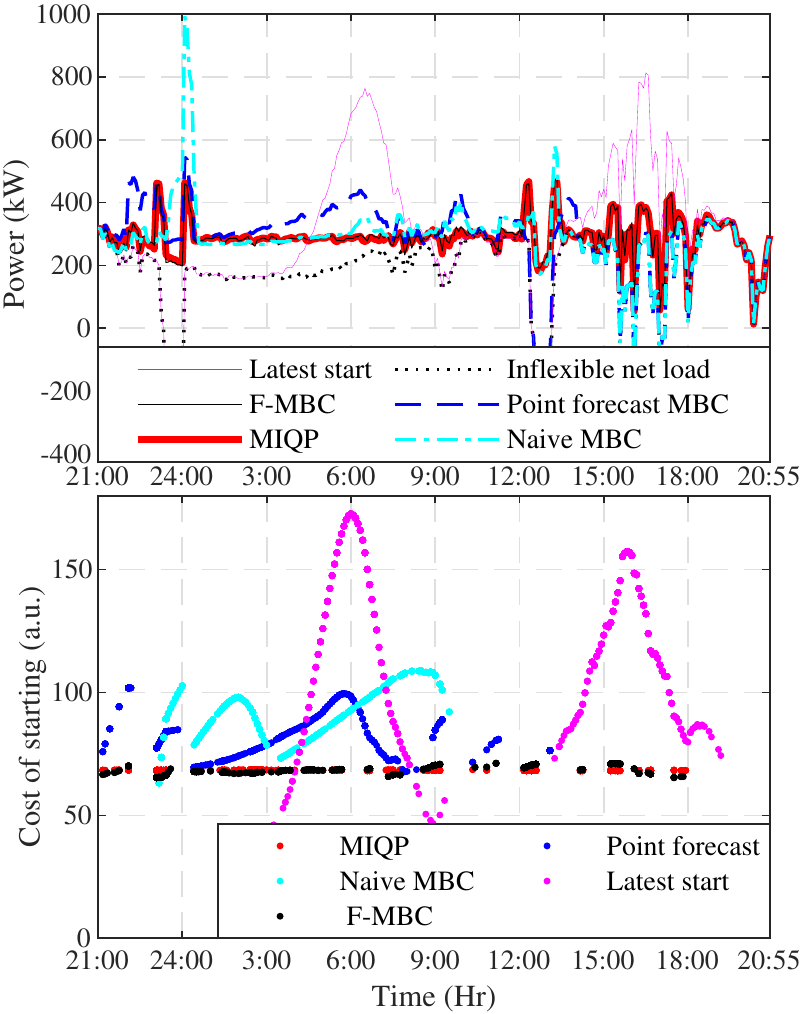}
        		\caption{Comparison of different coordination techniques. Top: Total net load. Bottom: starting cost against starting time per device}
        		\label{fig: methods}
        	\end{figure}

            To evaluate the effect of forecast uncertainty on the performance of the F-MBC approach, we vary $\nu^{24h}$ from $10^{-5}$ to $1$ (i.e.\ 100\%). \cref{fig: sensitivity} shows the results of \SI{20} independent simulation runs for each value of $\nu^{24h}$. The top panel shows the distribution of realised cost of flexible generation, compared with the reference solution. It demonstrates near-optimal performance even for significant uncertainties in forecast prices.  
            
                    	\begin{figure}[th]
        		\centering
        		\includegraphics[width=\linewidth,keepaspectratio]{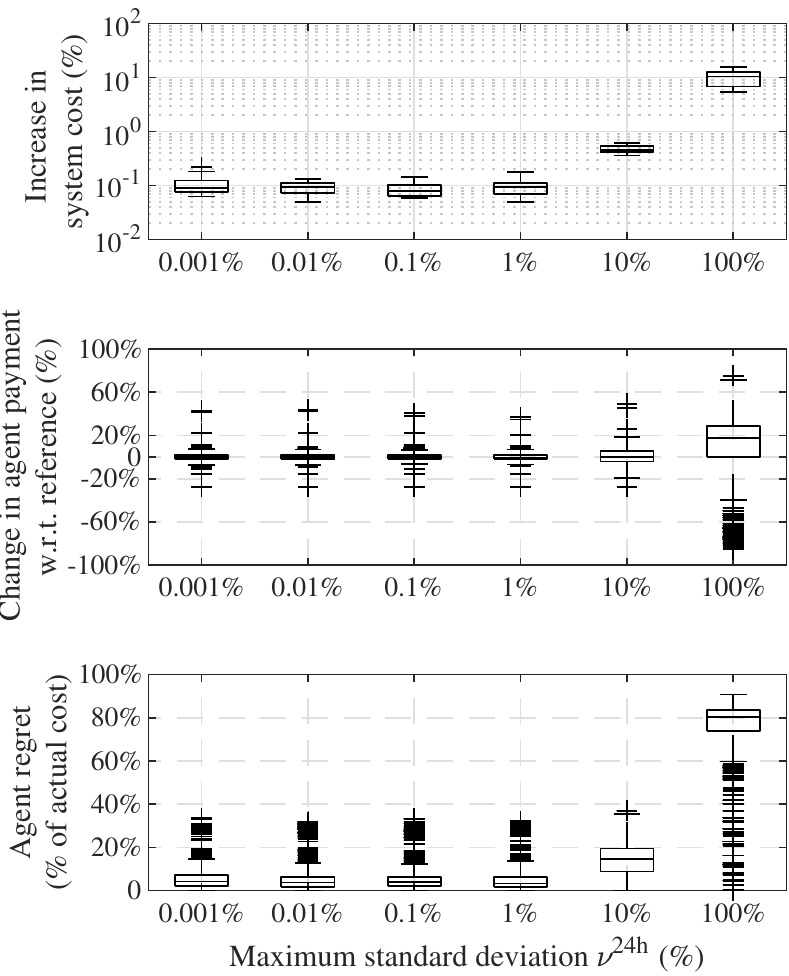}
        		\caption{Sensitivity to uncertainty. Top: Increase in system cost with respect to the optimal solution. Middle: Distribution of  change in agent payments compared to optimal solution. Bottom: Distribution of regret of device agents.}
        		\label{fig: sensitivity}
        	\end{figure}
            The middle panel compares the individual payments made by device agents ($1200 \times 20$ for each value of $\nu^{24h}$) against the payments under the reference schedule. For forecast uncertainties up to 10\%, these are approximately zero-mean, so that devices are on average as well off using the F-MBC coordination scheme as under the Nash equilibrium. 
            
            Finally, the bottom panel depicts the distribution of regret that device agents have as a result of F-MBC (i.e.\ the difference between the actual price paid and the lowest possible price in retrospect). The positive values indicate small deviations from a Nash equilibrium. However, the computed regret can only be used to generate cost savings individually: collectively, devices would quickly equalise price savings, as is evidenced by the small system cost deviations in the top panel. 

        Collectively, these results demonstrate that when supplied with nearly optimal reference prices, F-MBC is able to approximate the optimal schedule, but small differences remain due to the `lumpiness' of load, in line with the approximate consistency results in \cref{sec:consistency}. However, these differences are small when averaged over many runs, and are expected to reduce further as the system size increases.

    \section{Discussion \& Conclusion} \label{sec: conclusion}
        In this paper we considered a setting of uninterruptible deferrable devices with deadlines set by their respective owners. By relying on decentralized decision making and centralized forecasting and market clearing, the F-MBC approach provides a simple and scalable means of DER coordination. In terms of communication infrastructure, the proposed mechanism can be implemented using only gathering of bids and broadcasts of prices and tie-breaking cut-off values from the auctioneer to all devices, significantly reducing implementation complexity. Moreover, since the information in these broadcasts only concerns public information, F-MBC preserves end user privacy and autonomy. The bidding algorithm was shown to automatically resolve mutually-conflicting decisions between devices with different deadlines and a tie breaking procedure was proposed to resolve conflicts between indifferent devices. It was shown by simulation that near-optimal performance can be attained by a clairvoyant facilitator, establishing the consistency of the approach. Moreover, an analysis of the sensitivity to price forecast uncertainty demonstrates the robustness of the approach. It was able to achieve good system-level and device-level performance across an extended horizon, making use of simple agent logic and single-period market clearing.
        
        Prices obtained using F-MBC are determined in real-time, thus exposing users to price uncertainty. The results suggest that the resulting cost fluctuations even out in the long run, so that users are not worse off - especially in comparison with less-optimal schemes. If such exposure is nevertheless undesirable, an alternative is to use F-MBC with a virtual currency, only for coordination and control. A different payment scheme (e.g.\ fixed subscription, average price, etc.\ ) can be operated in parallel.
	   
	    This paper has introduced the F-MBC concept and established its desirable properties in a limited set of applications, thus laying the groundwork for various generalizations. As a proof of concept, we use uninterruptible deferrable loads. However, relevant extensions for future work are the inclusion of heterogeneous sets of deferrable loads, interruptible loads and continuously controllable loads. For example, the charging of electric vehicles can be approximated as one of uninterruptible deferrable loads, so that the results derived in this paper directly apply. However, more elaborate charging models will require extensions to the bidding and clearing algorithms, and are the subject of future work. 
	    
	    In addition, machine learning approaches could be used to generate the forecasts, instead of the stylized approach used here, and performance under the influence of external noise (e.g.\ uncertain wind power output) would be relevant to investigate to better understand the behaviour of F-MBC in practice.

\section*{Acknowledgments}
The authors thank the anonymous reviewers for helpful questions and suggestions that led us to improve this paper.

\begin{appendix}
    \section{Proof of Lemma \ref{lem: diversity}} \label{sec:diversityproof}
        \begin{lemma} For two rapid-starting, deadline-ordered agents in the \texttt{waiting} state, at time-step $t=d^{1}-D$, $C^{*1}_{d^{1}-D}>C^{*2}_{d^{1}-D}$.
	            \label{le: lastC}
            \end{lemma}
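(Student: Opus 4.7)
The plan is to compare the expected costs by observing that agent $1$ is forced to start at $t = d^1 - D$, whereas agent $2$ retains the option to wait. First I would invoke formula \eqref{eq: lastC} directly for agent $1$: since $d^1 - D$ is its last possible starting time, we have $C^{*1}_{d^1-D} = \mathbb{E}\!\left[C^{s,1}_{d^1-D}(X_{d^1-D})\right]$, i.e.\ the expected cost of an unconditional start. Because the two agents are identical in duration and consumption pattern, the immediate-start cost functions coincide: $C^{s,1}_{t}(\cdot) \equiv C^{s,2}_{t}(\cdot)$.

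Next I would characterise $C^{*2}_{d^1-D}$ via the threshold policy in \cref{th:bids}. Since $d^1 - D < d^2 - D$, agent $2$ is still in the regime of equation \eqref{eq: threshold2} where the threshold $\hat{x}^2_{d^1-D} = z^2_{d^1-D}$ is finite (not $\infty$), so the optimal action genuinely depends on the realised price. Expanding \eqref{eq: optimalC} and using $C^{s,2} = C^{s,1}$ gives
\begin{align}
 C^{*2}_{d^1-D} = {} & \Pr(X_{d^1-D} \le \hat{x}^2_{d^1-D})\,\mathbb{E}\!\left[C^{s,1}_{d^1-D}(X_{d^1-D}) \mid X_{d^1-D} \le \hat{x}^2_{d^1-D} \right]  \nonumber \\
 & + \Pr(X_{d^1-D} > \hat{x}^2_{d^1-D})\, C^{*2}_{d^1-D+1}.
\end{align}

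To obtain strict inequality I would compare this against $\mathbb{E}[C^{s,1}_{d^1-D}(X_{d^1-D})]$ conditioned on the event $\{X_{d^1-D} > \hat{x}^2_{d^1-D}\}$. By the defining equality \eqref{eq:thresholdequality}, $C^{s,2}_{d^1-D}(\hat{x}^2_{d^1-D}) = C^{*2}_{d^1-D+1}$, and because $C^{s,2}_{d^1-D}(x)$ is strictly increasing in $x$ (recall $P_0 \neq 0$ in \eqref{eq: startC}), we have $C^{s,1}_{d^1-D}(x) = C^{s,2}_{d^1-D}(x) > C^{*2}_{d^1-D+1}$ for every $x > \hat{x}^2_{d^1-D}$. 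Thus on the high-price event, waiting yields a strictly smaller expected cost than starting.

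Finally, invoking the standing assumption $\Pr(X_{d^1-D} > \hat{x}^2_{d^1-D}) > 0$ (valid because agent $2$'s threshold is finite at $t = d^1 - D$) upgrades the inequality $C^{*2}_{d^1-D} \le C^{*1}_{d^1-D}$ to strict. The main obstacle I anticipate is handling the boundary bookkeeping cleanly: ensuring the threshold $\hat{x}^2_{d^1-D}$ is indeed finite at $t = d^1-D$ (which requires $d^1 - D < d^2 - D$, equivalently $d^1 < d^2$ as assumed) and that $C^{s,2}_{d^1-D}$ is strictly increasing so that the strict gap on the waiting event is nonzero — both of which follow from the rapid-starting hypothesis $P_0 \neq 0$.
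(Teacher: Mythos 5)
Your proposal is correct and takes essentially the same approach as the paper: both proofs condition agent $2$'s optimal cost on whether $X_{d^1-D}$ exceeds the threshold $\hat{x}^2_{d^1-D}$, cancel the waiting branch against $C^{s,2}_{d^1-D}(\hat{x}^2_{d^1-D})$ via the threshold-defining equality \eqref{eq:thresholdequality}, and obtain strictness from $\Pr(X_{d^1-D}>\hat{x}^2_{d^1-D})>0$ together with $P_0\neq 0$. The only cosmetic difference is that the paper reduces the comparison to an inequality on $\mathbb{E}[X_{d^1-D}]$ itself, whereas you keep everything at the level of $C^{s}$; the two are equivalent by the linearity of $C^{s,a}_t$ in the price.
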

            \begin{proof}
                Agent \SI{1} must run at $t=d^{1}-D$, so restating \eqref{eq: lastC}:
		        \begin{align}
        		     & C^{*1}_{d^{1}-D}
        		     =   \nonumber \\
        		     & \left[\mathbb{E}[X_{d^{1}-D}] \cdot P_{0}^{1} \cdot \Delta t
        		     +
        		     \sum_{i=1}^{D^1-1}
        		     \bar{x}_{d^{1}-D+i} \cdot P_{i}^{1} \cdot \Delta t \right].
		        \end{align}
		        Using the definition $p=\mathrm{Pr}(X_{t}>\hat{x}_{t}^{2})$, the optimal expected cost \eqref{eq: optimalC} for agent \SI{2} is rewritten as
		        \begin{align}                
					 C^{*2}_{d^{1}-D}=&	p \left[
    				C^{*2}_{d^{1}-D+1}
    				-
    				C^{s,2}_{d^{1}-D}(\hat{x}^2_{d^{1}-D})
    				\right]	+	p \hat{x}_{d^{1}-D}^{2} \cdot P_{0}^{1} \cdot \Delta t																\nonumber	\\
    				& + (1-p) \mathbb{E}[X_{d^{1}-D}|X_{d^{1}-D}\leq \hat{x}_{d^{1}-D}^{2}]\cdot P_{0}^{1} \cdot \Delta t \nonumber	 \\
    				& + \sum_{i=1}^{D^{1}-1} 
    				\bar{x}_{d^{1}-D+i} \cdot P_{i}^{1} \cdot \Delta t.	
    				\label{eq: long}
		        \end{align}
		        The first term vanishes as a result of the threshold price definition, so the inequality $C^{*1}_{d^{1}-D} > C^{*2}_{d^{1}-D}$ to be proven can be simplified to
		        \begin{align}
        		    & \mathbb{E}[X_{d^{1}-D}] > \nonumber \\
        		    & p \hat{x}_{d^{1}-D}^{2}+
                    (1-p)\mathbb{E}[X_{d^{1}-D}|X_{d^{1}-D}\leq \hat{x}_{d^{1}-D}^{2}].
                    \label{eq: show}
                \end{align}
                Considering $\mathbb{E}[X_{d^{1}-D}]$ as the weighted sum of two conditional expectations (above and below $\hat{x}^2_{d^1-D}$), this can simplified to
                \begin{equation}
		           p \left( \mathbb{E}[X_{d^{1}-D}|X_{d^{1}-D}
                    >
                    \hat{x}_{d^{1}-D}^{2}] - \hat{x}_{d^{1}-D}^{2} \right) > 0.
		        \end{equation}
		        This is positive under the assumption that $p>0$ (prices can exceed $\hat{x}^2_t$). Therefore \eqref{eq: show} holds and  $C^{\mathrm{*1}}_{d^{1}-D}>C^{\mathrm{*2}}_{d^{1}-D}$.
            \end{proof}

            \begin{lemma} \label{le: Cs}
		        For any two rapid-starting, deadline-ordered agents in the \texttt{waiting} state, and $t < d^{1}-D$, if $C^{*1}_{t+1}>C^{*2}_{t+1}$, then $C^{*1}_{t}>C^{*2}_{t}$.
	        \end{lemma}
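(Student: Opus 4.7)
The plan is to derive a strict ordering on the threshold bids from the induction hypothesis and then combine that ordering with the optimality of each agent's threshold (\cref{th:bids}) to propagate the strict cost ordering backwards one step. The first observation is that since both agents are rapid-starting with identical $D$ and $\{P_i\}$, and since $t < d^1 - D \le d^2 - D$ places both still inside the ``free'' regime, the threshold formula \eqref{eq: threshold2} reduces for each agent to $\hat{x}^a_t = (C^{*a}_{t+1} - S)/(P_0 \Delta t)$, with a common deferred-cost sum $S := \sum_{i=1}^{D-1} \bar{x}_{t+i} P_i \Delta t$. Because this is a strictly increasing function of $C^{*a}_{t+1}$, the hypothesis $C^{*1}_{t+1} > C^{*2}_{t+1}$ immediately yields $\hat{x}^1_t > \hat{x}^2_t$. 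Note also that the start-cost satisfies $C^{s,1}_t(x) = C^{s,2}_t(x) =: C^s_t(x)$ by identity of the consumption patterns.

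Next, I would introduce the auxiliary value function
\begin{equation*}
V_a(h) := \Pr(X_t > h)\, C^{*a}_{t+1} + \Pr(X_t \le h)\,\mathbb{E}\left[C^s_t(X_t) \,\middle|\, X_t \le h\right],
\end{equation*}
representing the expected cost when agent $a$ adopts an arbitrary threshold $h$ at step $t$ and plays optimally afterwards. By \cref{th:bids}, $V_a$ is minimized at $h = \hat{x}^a_t$ and $C^{*a}_t = V_a(\hat{x}^a_t)$ coincides with \eqref{eq: optimalC}. Crucially, $V_1$ and $V_2$ differ only through their first term, so $V_1(h) - V_2(h) = \Pr(X_t > h)\,[C^{*1}_{t+1} - C^{*2}_{t+1}]$ for every $h$. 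Chaining these facts yields
\begin{equation*}
C^{*1}_t = V_1(\hat{x}^1_t) > V_2(\hat{x}^1_t) \ge V_2(\hat{x}^2_t) = C^{*2}_t,
\end{equation*}
where the strict inequality uses the induction hypothesis together with the standing assumption $\Pr(X_t > \hat{x}^1_t) > 0$ stated just before \cref{lem: diversity}, and the non-strict inequality uses optimality of $\hat{x}^2_t$ for $V_2$.

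The main subtlety — and the only conceptual obstacle — is in the middle step: directly subtracting the Bellman recursion \eqref{eq: optimalC} for the two agents is awkward because their integration limits $\hat{x}^1_t$ and $\hat{x}^2_t$ differ. Substituting agent~2's threshold by $\hat{x}^1_t$ in $V_2$, justified by the minimization property from \cref{th:bids}, aligns the two expressions so that the gap collapses to a single term $\Pr(X_t > \hat{x}^1_t)\,[C^{*1}_{t+1}-C^{*2}_{t+1}]$, after which strictness follows immediately from the positivity assumption on the forecast.
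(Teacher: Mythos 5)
Your proof is correct and follows essentially the same route as the paper: your auxiliary function $V_a(h)$ is exactly the paper's $C^{b,a}_t(\hat{x})$, and the chain $C^{*1}_t = V_1(\hat{x}^1_t) > V_2(\hat{x}^1_t) \ge V_2(\hat{x}^2_t) = C^{*2}_t$ is precisely the paper's combination of its two key inequalities (strictness from $\Pr(X_t>\hat{x}^1_t)>0$, the non-strict step from optimality of $\hat{x}^2_t$). The preliminary derivation of $\hat{x}^1_t > \hat{x}^2_t$ is harmless but not needed for this lemma; the paper defers that ordering to the main proof of \cref{lem: diversity}.
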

            \begin{proof}
    	        Analogous to \eqref{eq: optimalC}, we define $C_{t}^{b,a}(\hat{x})$ the expected cost incurred by agent $a$ at time $t$, when it submits a first bid $\hat{x}$ and subsequent optimal bids:
    	        \begin{multline}
                        C_{t}^{b,a}(\hat{x})= \mathrm{Pr}(X_{t}>\hat{x})\cdot C^{*a}_{t+1} 	\\
    					+
    						\mathrm{Pr}(X_{t}\leq \hat{x}) \cdot \mathbb{E}\left[C^{s,a}_t (X_{t})|X_{t}\leq \hat{x}\right] 
	            \end{multline}
                At any time-step $t$, the expected cost incurred by agent \SI{2} for bidding with a price $\hat{x}_{t}^{1}$ is by definition greater than or equal its optimal expected cost,
	            \begin{equation}
                    C_{t}^{b,2}(\hat{x}_{t}^{1}) 
                     \geq
                     C_{t}^{b,2}(\hat{x}_{t}^{2})=C^{*2}_{t}.
                     \label{eq: key1}
                \end{equation}
                The condition $C^{*1}_{t+1}>C^{*2}_{t+1}$, combined with \eqref{eq: key1}, and the assumption $\mathrm{Pr}(X_t > \hat{x}^2_t) > 0$, implies 
                \begin{equation}
                    C^{*1}_{t}
                     =
                     C_{t}^{b,1}(\hat{x}_{t}^{1})
                     >
                    C_{t}^{b, 2}(\hat{x}_{t}^{1}).
                    \label{eq: key2}
                \end{equation}
                Combining \eqref{eq: key1}, \eqref{eq: key2} yields the desired result $
                     C^{*1}_{t}
                     >
                     C^{*2}_{t}$.
            \end{proof}
            
            \emph{Proof of \cref{lem: diversity}}:
            	 Consider two identical, deadline-ordered agents in the \texttt{waiting} state. The ordering of threshold prices $\hat{x}_{t}^{1}>\hat{x}_{t}^{2}$ follows from \eqref{eq: threshold2}, provided that $C^{*1}_{t+1}>C^{*2}_{t+1}$. The latter condition is guaranteed by \cref{le: lastC} and induction using \cref{le: Cs}. Because this holds for any two agents, it also holds for the entire collection. The weakly ordered result follows by considering agents with equal deadlines. Because such devices are indistinguishable (other aspects were already identical), symmetry requires that their threshold bids are identical.

            	 \section{Proof of Lemma 7}
            	 \label{sec:pricediffproof}
            	 
            	 \begin{proof}
            	         First, if no devices are available to start, then $x_t=x_t^*$ by virtue of $P^*_t$ being optimal, and 0 lies within the bounds of \eqref{eq:pricebounds} as required. 
        
        Next, we prove the lower bound for the clearing price. The case $x_t < x_t^*$ occurs when agents have (unrealistically) low expectations for future costs and therefore submit low bids. We consider the lowest possible bid placed by a device that \emph{should} start at $t$ according to $\mathcal{S}$. \cref{th: diversity} guarantees that the lowest bid is placed by the device $a$ with the latest deadline $d^a$ (among devices that should start at $t$). The 
        bid $\hat{x}^{\downarrow}_t$ of this device is bounded by the lowest possible cost associated with waiting, according to the inequality:
        \begin{align}
         C^{s,a}_t(\hat{x}^{\downarrow}_t)& \ge \min_{t' \in \{ t+1,\ldots,d^a-D \}} \Delta t \sum_{i=0}^{D-1} P_i \left[ m_{t'+i}(P^*_{t'+i}) - \eta \right] \nonumber \\
            & \ge \min_{t' \in \{ t+1,\ldots,d^a-D \}} K^a_{t'} - \eta \Delta t  \sum_{i=0}^{D-1} P_i,
        \end{align}
        where the lower bounds of the price forecasts $X_{t'}$ and  \eqref{eq:Kdefinition} have been used. Cost-optimality of $\mathcal{S}$ and the fact that the device should start at $t$ implies $K_t \le K_{t'}$ \eqref{eq:Kcondition}, so that
        \begin{align}
          C^{s,a}_t(\hat{x}^{\downarrow}_t) & \ge K_{t}^a - \eta \Delta t  \sum_{i=0}^{D-1} P_i, \\
            & \ge \Delta t \sum_{i=0}^{D-1} P_i \left[  m_{t+i}(P^*_{t+i} - P_i) - \eta \right].
        \end{align}
        Expanding $C_t^{s,a}(\hat{x}^{\downarrow}_t)$ using \eqref{eq: startC} and making use of definition \eqref{eq:deltamdef} results in the inequality
        \begin{equation}
          P_0\left[\hat{x}^{\downarrow}_t - x_t^*\right] \ge - \sum_{i=0}^{D-1} P_i \left[ \Delta m_{t,i} + \eta \right]
        \end{equation}
        If $x_t<x_t^*$, at least one fewer device has started than was accounted for in the allocation $\mathcal{S}$. Therefore, the bid $\hat{x}^{\downarrow}_t$ of the marginal device must bound the clearing price $x_t$ from below, and the lower bound of \eqref{eq:pricebounds} follows. 
        
        The upper bound can be derived by considering devices that should \emph{not} start under $\mathcal{S}$, but submit a high bid because starting immediately appears to be cheaper than starting at their scheduled time $t^a > t$. The magnitude of such bids $\hat{x}^a_t$ is bounded by 
        \begin{align}
         C^{s,a}_t(\hat{x}^a_t) & = C^{\mathrm{*a}}_{t+1} \\ 
         & \le C^{s,a}_{t^a}(\overline{x}_{t^a}) \\
         & = \Delta t \sum_{i=0}^{D-1} P_i \left[ m_{t^a+i}(P^*_{t^a+i}) \right],
        \end{align}
        where we have used \eqref{eq:thresholdequality} and the fact that the optimal expected cost $C^{\mathrm{*a}}_{t+1}$ is always upper-bounded by the expected cost for one specific feasible time $t'>t$, including the special case $t'=t^a$. 
        Using the definition of $C^{s,a}_t$:
        \begin{align}
            \Delta t P_0 \left[\hat{x}^a_t - x_t^* \right] & \le  \Delta t \sum_{i=0}^{D-1} P_i \left[ m_{t^a+i}(P^*_{t^a+i}) - m_{t+i}(P^*_{t+i}) \right] \nonumber
            \\
            & \le \Delta t \sum_{i=0}^{D-1} P_i \left[ m_{t^a+i}(P^*_{t^a+i}) \right] - K_t^a \nonumber \\
            & \le \Delta t \sum_{i=0}^{D-1} P_i \left[ m_{t^a+i}(P^*_{t^a+i}) \right] - K_{t^a}^a, \nonumber \\
            & \le \Delta t \sum_{i=0}^{D-1} P_i \Delta m_{t^a, i}
        \end{align}
        where we have used the fact that $K_{t^a}^a \le K_t^a$ due to cost-optimality of $\mathcal{S}$.
        The bound for the highest bid  $\hat{x}^{\uparrow}_t$ of a device that should \emph{not} run is therefore defined as 
        \begin{align}
            P_0 \left[\hat{x}^{\uparrow}_t - x_t^* \right] & \le \max_{a \in \{a' \in \mathcal{A}: t^a > t\}} \sum_{i=0}^{D-1} P_i \Delta m_{t^a,i} \\
            & \le \max_{t' \in \{t,\ldots,  T\}} \sum_{i=0}^{D-1} P_i \Delta m_{t',i}
        \end{align}
        If $x_t>x_t^*$, at least one more device started than in the allocation $\mathcal{S}$, so the clearing price $x_t$ must be at or below $\hat{x}^{\uparrow}_t$, and the upper bound of \eqref{eq:pricebounds} follows.
        \end{proof}
            	 
	        \end{appendix}

\end{document}